\newtheorem{theorem}{Theorem}
\newtheorem{proposition}[theorem]{Proposition}
\title {\textbf{Long Coalition Leads to Shrink? The Roles of Tipping and Technology-Sharing in Climate Clubs}}
\author[a]{\small\textit{Lei Zhu}}
\author[a]{\textit{Zhihao Yan}}
\author[b]{\textit{Hongbo Duan}}
\author[c]{\textit{Yongyang Cai\thanks{Corresponding Author: Yongyang Cai (cai.619@osu.edu). Lei Zhu, Hongbo Duan, Zhihao Yan, and Xiaobing Zhang contributed equally.}}}
\author[d]{\textit{Xiaobing Zhang}}
\affil[a]{\footnotesize\textit{School of Economics and Management, Beihang University\\100191, Beijing, P. R. China}}
\affil[b]{\textit{School of Economics and Management, University of Chinese Academy of Sciences\\100190, Beijing, P. R. China}}
\affil[c]{\textit{Department of Agricultural, Environmental, and Development Economics, The Ohio State University\\OH 43210. USA}}
\affil[d]{\textit{Department of Technology, Management and Economics, Technical University of Denmark\\2800, Copenhagen, Denmark}}
\begin{document}
\setstretch{1.25}

\maketitle

\begin{abstract}
\noindent Global cooperation is posited as a pivotal solution to address climate change, yet significant barriers, like free-riding, hinder its realization. This paper develops a dynamic game-theoretic model to analyze the stability of coalitions under multiple stochastic climate tippings, and a technology-sharing mechanism is designed in the model to combat free-ridings. Our results reveal that coalitions tend to shrink over time as temperatures rise, owing to potential free-ridings, despite a large size of initial coalition. The threat of climate tipping reduces the size of stable coalitions compared to the case where tipping is ignored. However, at post-tipping period, coalitions temporarily expand as regions respond to the shock, though this cooperation is short-lived and followed by further shrink. Notably, technology-sharing generates greater collective benefits than sanctions, suggesting that the proposed dynamic technology-sharing pathway bolsters coalition resilience against free-riding while limiting the global warming. This framework highlights the critical role of technology-sharing in fostering long-term climate cooperation under climate tipping uncertainties. 

\noindent\textbf{Keywords:} International climate cooperation; Tipping events; Technology-sharing; Stable coalition; Dynamic game; Markov strategy 

\noindent\textit{\textbf{JEL Code:} H23; Q51; Q54, Q55; Q58} 
\end{abstract}

\section{Introduction}\label{s1}
Climate change represents one of the most pressing global challenges of our time, necessitating unprecedented cooperation among nations to mitigate its escalating impacts. As global temperatures continue to rise, the international community has sought to foster collective action through mechanisms such as the Paris Agreement. While this landmark accord has been criticized for its insufficient stringency to achieve the targeted temperature goals of 1.5°C or 2°C, it remains the largest coalition of its kind, with 194 signatory states as of 2023\footnote{https://unfccc.int/process-and-meetings/the-paris-agreement}. The challenge of facilitating large-scale climate cooperation has been a central focus in the study of international environmental agreements (IEAs). A persistent obstacle, termed "the paradox of cooperation" by \citet{barrett1994self}, arises from the inherent tension between individual and collective rationality: the greater the potential benefits of cooperation, the more difficult it becomes to sustain a large coalition. This phenomenon is theoretically substantiated by \citet{Finus2024}, who demonstrate that countries are incentivized to free-ride, opting to remain outside climate coalitions despite the collective benefits of participation. 

To address the free-rider problem in international climate negotiations, scholars have proposed the implementation of sanctions against non-participants (\citealp{bahn2009stability}; \citealp{breton2010dynamic}). \citet{Nordhaus2015} applies club theory to advocate for a uniform tariff mechanism to penalize non-member countries, arguing that such punitive measures can enhance cooperation levels and overall welfare without significantly disrupting trade efficiency. This approach has spurred a growing body of research exploring the role of trade sanctions in climate cooperation. For instance, \citet{clausing2023carbon} highlight the potential of market access and border carbon adjustments (BCAs) to strengthen international climate mitigation efforts. In 2023, the European Union operationalized this concept through the Carbon Border Adjustment Mechanism (CBAM), which imposes carbon tariffs on carbon-intensive imports\footnote{https://taxation-customs.ec.europa.eu/carbon-border-adjustment-mechanism}. 

However, the use of sanctions remains contentious. Critics argue that such measures may destabilize the existing international trade order. \citet{ernst2023carbon} employ a dynamic three-region multi-sector general equilibrium model to assess the welfare implications of carbon pricing and border adjustment taxes. Their findings suggest that while BCAs mitigate carbon leakage, they do not eliminate it, nor do they provide sufficient incentives for non-participating regions to join climate coalitions. Similarly, \citet{hagen2024political} examine the influence of industrial lobbying on coalition formation under BCAs, concluding that lobbying can exacerbate emissions and reduce welfare unless carbon tax revenues are redistributed to firms. In light of these challenges, alternative mechanisms to promote climate cooperation have been proposed, including intra-coalition payment transfers (\citealp{eyckmans2003simulating}; \citealp{brechet2011efficiency}; \citealp{fanokoa2011buying}), solar geoengineering (\citealp{mcevoy2024international}), and technology-sharing (\citealp{BARRETT2013235}; \citealp{paroussos2019climate}; \citealp{acemoglu2023climate}).  \citet{nordhaus2021dynamic} further contend that a dual approach-combining trade sanctions (such as tariffs) against non-members with low-carbon technology-sharing within the climate coalition- is essential to achieve cost-effective deep decarbonization. Despite their potential, these mechanisms have yet to be fully realized within the framework of the Paris Agreement, leaving an open question for further investigation. 

The urgency of climate cooperation is underscored by the risk of triggering abrupt and irreversible climate tipping events, which could precipitate catastrophic ecological and economic consequences (e.g., \citealp{1b655bffe5064ccca306d855d369e053}; \citealp{PMID:19179281}; \citealt{cai_NCC_2016}; \citealp{cai2017social}; \citealp{CaiLontzek_DSICE}). According to the IPCC's Fifth Assessment Report, surpassing a global temperature increase of 2°C \citep{adopted2014climate}—or even 1.5°C \citep{masson2022global}—above pre-industrial levels could lead to devastating outcomes, including glacial melt, sea-level rise, and biodiversity loss. While the literature extensively discusses tipping events (e.g., \citealp{10.2307/25451366}; \citealp{PMID:19179281}; \citealp{dietz2021economic}), fewer studies have quantified their economic impacts (e.g., \citealp{doi:10.1126/science.1081056}; \citealp{2011NatCC...1..201L}; \citealp{Weitzman2012}; \citealp{berger_managing_2016}; \citealp{heutel_climate_2016}; \citealp{daniel2019declining}).

The timing and likelihood of these tipping events depend critically on the effectiveness of global emission regulation and temperature control. The threat of tipping events may influence international cooperation, as regions adjust their emission strategies in response to potential catastrophes. Scholars have integrated tipping thresholds into integrated assessment models (IAMs) to explore optimal climate policies, such as carbon taxes, under the threat of tipping events (e.g., \citealp{DEZEEUW2012939}; \citealp{lontzek_NCC_2015}; \citealp{cai_NCC_2016}; \citealp{NKUIYA2016193}; \citealp{van2018climate}; \citealp{cai2017social}; \citealp{CaiLontzek_DSICE}). However, much of this research focuses on extreme cases of full cooperation or non-cooperation, with limited attention to partially cooperative arrangements and coalition stability (\citealp{BARRETT2013235}; \citealp{NKUIYA20151}; \citealp{f9e3bc17ff5249dcb5442608fede3c7b}).

In this paper, we develop a dynamic game-theoretic model to investigate the formation and stability of coalitions among heterogeneous regions in the context of climate tipping events. The model builds on the literature on international environmental agreements (IEAs), which has explored various assumptions, including homogeneity versus heterogeneity among regions (e.g., \citealp{FUENTESALBERO2010255}; \citealp{RePEc:bpj:bejeap:v:10:y:2010:i:2:n:3}), static versus dynamic game structures (e.g., \citealp{doi:10.1177/0951692899011004004}; \citealp{rubio2007infinite}), and deterministic versus stochastic frameworks (e.g., \citealp{KOLSTAD200768}; \citealp{10.1093/oep/gpr054}). Additionally, the robustness of these models has been examined through both non-parametric and parametric approaches (e.g., \citealp{KARP2013326}; \citealp{RePEc:kap:enreec:v:62:y:2015:i:4:p:811-836}). By deriving Markov strategies for emissions among member and non-member regions, our heterogeneous dynamic model provides a more nuanced understanding of how coalitions evolve over time and in response to rising temperatures, offering both theoretical and practical insights. Our model extends the frontier of IEA research in several aspects: 

First, we focus on the conditions necessary to establish and maintain a grand coalition among heterogeneous regions in a dynamic setting. To cope with free-riding behavior, we introduce a technology-sharing mechanism that enhances the productivity of member regions, rather than relying on sanctions against non-members. This approach draws on the proposal by \cite{paroussos2019climate}, who argue that climate clubs can facilitate technology diffusion and provide low-cost climate financing, thereby improving production efficiency. The most closely related work is \citet{nordhaus2021dynamic}, which introduces a dynamic Trade-DICE (TDICE) model to analyze a representative country’s decision-making. Unlike non-cooperative Nash equilibrium approaches, the model evaluates a country’s trade-off between the costs of $\text{CO}_2$ abatement (if joining a climate club) and the economic penalties from trade sanctions (if remaining outside), deriving supportable policies that minimize cumulative emissions. While this work has significantly advanced research on dynamic climate coalitions, its reliance on a representative-country modeling approach simplifies coalition dynamics by reducing participation to a single parameter ("Fraction of world in club"). This aggregation limits the model's ability to investigate how coalitions evolve over time. 

Second, in line with the overarching goal of climate cooperation to prevent tipping events, we model climate tipping events as stochastic processes dependent on temperature rise. Actually, the impacts of climate shocks have been intensively studied, including using a Poisson distribution to uncover the influence of climate shocks on permanent damage to capital stock (\citealp{Pindyck_Wang_2013}); introducing tipping events through an indicator function that triggers sudden increases in climate losses when temperature thresholds are exceeded (\citealp{barnett2023climate}); and using a discrete-time Markov process to characterize the effect of tipping events on damages to production (\citealp{zhao2023social}). Drawing on these frameworks, we adopt a hazard rate approach (following \citealp{cai_NCC_2016}, \citealp{cai2017social}, and \citealp{CaiLontzek_DSICE}) to incorporate the threat of tipping into regions' benefit functions dynamically. Our model also considers the possibility of multiple, sequentially occurring tipping events.

Third, given the complexity of the model, we propose a nested numerical and analytical solution method. \citet{nordhaus2021dynamic} recognized that estimating coalition equilibria in a dynamic framework is computationally prohibitive—if not intractable—and thus proposed an alternative approach to analyze how supportive policies influence individual countries' decisions. In contrast, we numerically solve this problem with endogenous emission control rates under non-cooperative dynamic Nash equilibrium. At each time period and temperature level, we distinguish between states where tipping has occurred and those where it has not. After a tipping event, the model is solved analytically with the derived Hamilton-Jacobi-Bellman (HJB) equations for member and non-member regions. We prove the existence and uniqueness of these analytical solutions. For periods before a tipping event, we employ Chebyshev polynomial approximation to solve the model numerically. Coalition stability is determined by internal and external stability conditions. By simplifying the model to a homogeneous framework and examining extreme cases, we theoretically demonstrate the tendency of coalitions to shrink as temperatures rise, and we also establish the equivalence between technology-sharing and sanctions in promoting coalition participation when the mechanisms are applied at consistent levels. 

Using parameters calibrated from the RICE model (\citealp{Nordhaus_RICE_2010}), our numerical results yield: 

\textit{Coalition shrinks over time with rising temperatures}: Our findings confirm "the paradox of cooperation" from a dynamic perspective, revealing that coalitions tend to shrink as global warming rise. Even when a large coalition forms initially, its stability diminishes over rising temperature. This occurs because the benefits of a grand coalition increase, but the gains from free-riding for non-member regions grow even more rapidly, rendering the coalition increasingly shrink. 

\textit{The role of climate tipping events}: The threat of climate tipping events reduces the size of stable coalitions compared to scenarios where tipping is not considered. However, coalitions formed after a tipping event are larger than those formed in its absence. This suggests that tipping events act as shocks, temporarily incentivizing cooperation. Nevertheless, this cooperative response is short-lived; if regions fail to implement effective temperature control measures, the coalition will continue to erode as temperatures rise. 

\textit{Overcoming the "Paradox of Cooperation"}: While technology-sharing and sanctions are equally effective in promoting coalition formation in theory, coalitions supported by technology-sharing are consistently larger than those relying on sanctions in numerical analysis, especially involving multiple tipping events. Given that technology-sharing generates greater overall benefits compared to sanctions, we propose a dynamic technology-sharing path. This path represents the minimum rate of technology transfer required to maintain a stable grand coalition, reflecting the coalition's resistance to free-riding while ensuring that global temperature rise remains within the lowest possible bounds. 

The remainder of the paper is organized as follows. Section 2 presents the model and theoretical proposition results, Section 3 describes the numerical methods and parametrization for cases, and Section 4 discusses the results from the aspects of dynamics of coalition formation, technology-sharing v.s. sanction, minimum technology-sharing path, and possible sequence of joining a coalition. Section 5 concludes this study, and the Appendix provides detailed derivations and solution processes.

\section{Model}

\subsection{Principle assumptions}
To analyze the long-term dynamics of climate coalition formation, we adopt a continuous-time model with $n$ heterogeneous regions. Regions are divided into coalition members ($i$) and non-members ($j$), and the coalition among member regions is expressed as $S=\left\{1... i...m\right\} $ when there is $m$ members within it. The model incorporates the following assumptions:

\textbf{Emissions and Production Benefits}: Each region derives economic benefits from production, which is linked to its emissions level $q\left(t\right)$ at time $t$. The benefit function is assumed to be quadratic in emissions: 
$$\theta \left(t\right)=\alpha q\left(t\right)-\frac{1}{2}\beta q\left(t\right)^2+\epsilon$$

\noindent where $\alpha$, $\beta$ and $\epsilon$ are region-specific benefit parameters. For simplicity, we exclude benefits from non-emissions growth (e.g., clean energy development or decoupling of GDP from $\text{CO}_2$ emissions), as the primary focus of climate cooperation is to address the negative externalities associated with greenhouse gas emissions. Non-emissions growth is treated as exogenous and does not influence coalition dynamics. 

\textbf{Climate Damages}: Global average temperature anomaly $T\left(t\right)$ above preindustrial levels imposes two types of costs on regions: 

\textit{Continuous Losses}: These are increasing and convex in temperature, reflecting the escalating economic and ecological impacts of climate change. The continuous loss function is specified as (e.g., \citealp{10.1093/oep/gpl028}; \citealp{Nordhaus2014}; \citealp{barnett2023climate}): 
$$D\left(t\right)=\frac{1}{2}\rho T\left(t\right)^{2}+\eta T\left(t\right)$$

\noindent where $\rho$ and $\eta$ are region-specific loss parameters. 

\textit{Tipping Event Losses}: If temperature exceeds a critical threshold, regions incur an additional, fixed loss $L$. The loss function in the presence of a tipping event becomes:
$$D\left(t\right)+L$$ 

\textbf{Temperature-Emissions Relationship}: Temperature rise is modeled as a linear function of cumulative global emissions, following the transient climate response to emissions (TCRE) framework (\citealp{matthews_proportionality_2009}):
$$T\left(t\right)=\lambda\int_{0}^t {\left(\sum_{i\in S} q_i\left(t\right)+\sum_{j \notin S}q_j\left(t\right)\right)}dt+T_0$$

\noindent where $\lambda$ is the TCRE parameter, $T_0$ is set as an initial temperature level. The rate of temperature change is thus: 
$$\frac{dT\left(t\right)}{dt}=\lambda \left(\sum_{i \in S} q_i\left(t\right)+\sum_{j \notin S}q_j\left(t\right)\right)$$

\textbf{Hazard Rate of Tipping Events}: The likelihood of a climate tipping event is modeled as a hazard rate $H\left(T\right)$, which increases with temperature rise. Following \citet{cai_NCC_2016}, \citet{cai2017social}, and \citet{CaiLontzek_DSICE}, we specify: 
$$H\left(T\right)=\chi \max\left(0,T-\underline{T}\right)$$
\noindent where $\chi$ is the hazard rate parameter and $\underline{T}$ is the temperature threshold below which the hazard rate is zero. This hazard rate is incorporated into the Hamilton-Jacobi-Bellman (HJB) equations to capture the risk of tipping in regions' value functions.

\textbf{Coalition Formation and technology-sharing}: Member regions collaborate to maximize the collective net benefits of the coalition, while non-members optimize their individual net benefits. At any time $t$, regions decide whether to join, leave, or remain inside/outside the coalition. For $n$ regions, there are $2^{n}-n-1$ possible coalition structures, as a nonempty coalition must include at least two regions.  

To reflect the benefits of cooperation, we introduce a technology-sharing effect within the coalition. Drawing on \cite{paroussos2019climate}, we assume that coalition membership enhances production efficiency through technology diffusion and low-cost climate financing. This effect is captured by a technology-sharing parameter $\tau$, which scales the benefits of member regions: 
$$\tau = \frac{\sum_{i \in S} \theta_i \left(t_0\right)}{\sum_{i \in S}  \theta_i \left(t_0\right)+\sum_{j \notin S}  \theta_j \left(t_0\right)}  \bar{\tau}$$
\noindent where $\bar{\tau}$ is the technology-sharing rate achievable under a grand coalition, and $t_0$ is a reference year used to calculate the initial benefits.

Consequently, regions decide their emissions to balance the benefits of emissions-associated production against the costss of climate-related damages. The net emissions-associated benefits for a member region $i$ before and after a tipping event are: 
\begin{center}
$\left(1+\tau\right)\theta_i\left(t\right)-D_i\left(t\right)$ and $\left(1+\tau\right)\theta_i\left(t\right)-D_i\left(t\right)-L_i$, 
\end{center}
\noindent respectively. For a non-member region $j$, the corresponding net benefits are: 
\begin{center}
$\theta_j\left(t\right)-D_j\left(t\right)$ and $\theta_j\left(t\right)-D_j\left(t\right)-L_j$. 
\end{center} 

\subsection{Decisions with only one tipping\label{subsec:one-tip}}

In this subsection, we analyze the strategic decisions of regions under the threat of a single climate tipping event. At any time $t$, the climate system can be in one of two states:

\textit{State 1}: The tipping event has occurred.

\textit{State 2}: The tipping event has not yet occurred.

For a given coalition $S$, member regions choose emissions $\sum_{i \in S} q_i\left(t\right)$, while non-member region $j$ choose its emissions $q_j\left(t\right)$ individually. We examine the optimal emissions strategies and value functions for regions in each state separately.

\subsubsection{The tipping has occurred}

In State 1, where the tipping event has already occurred, member and non-member regions simultaneously solve their respective optimization problems to maximize their net emissions-associated benefits. The collective net benefits for member regions within the coalition $S$ are given by:
\begin{equation}
V^{1}=\max_{\{q_i\}_{i \in S}}\ \sum_{i\in S}\int_{0}^{\infty}e^{-rt}\left(\left(1+\tau\right)\theta_i\left(t\right)-D_i\left(t\right)\right)dt
\end{equation} 

\noindent while the net benefits for each non-member region $j\not\in S$ are:
\begin{equation}
W^{1}_{j}=\max_{q_j}\ \int_{0}^{\infty}e^{-rt}\left(\theta_j\left(t\right)-D_j\left(t\right)\right)dt
\end{equation}

\noindent where, $r$ denotes the discount rate, $V^{1}$ and $W^{1}_{j}$ represent the value functions of the coalition and each non-member region $j$, respectively. Note that these value functions implicitly depend on the coalition $S$, though we omit this dependence for notational simplicity. 

The following proposition characterizes the value functions and optimal strategies for both coalition members and non-member regions in State 1, establishing key analytical properties of the equilibrium solution. 

\begin{proposition}
\label{prop1}
The value functions $V^1$ and $W^1_j$ are quadratic in $T$, and the optimal emissions strategies $q^{1*}_i$ and $q^{1*}_j$ are linear in $T$. 
\end{proposition}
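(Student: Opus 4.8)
\emph{Proof proposal.} The plan is to solve (1)--(2) by dynamic programming with a guess-and-verify argument, exploiting that in State 1 the hazard rate $H(T)$ is irrelevant and the fixed loss $L_i$ contributes only the constant $L_i/r$ after discounting. The resulting problems are autonomous, infinite-horizon control problems whose single state variable is the temperature $T$, with linear dynamics $\dot T = \lambda\bigl(\sum_{i\in S}q_i + \sum_{j\notin S}q_j\bigr)$. First I would write the stationary HJB equation for the coalition,
\[
rV^1(T) = \max_{\{q_i\}_{i\in S}} \Bigl[ \sum_{i\in S}\bigl((1+\tau)\theta_i(q_i) - D_i(T)\bigr) + \lambda V^{1\prime}(T)\Bigl(\sum_{i\in S}q_i + \sum_{j\notin S}q^{1*}_j(T)\Bigr) \Bigr],
\]
and, for each non-member, taking the other regions' feedback strategies as given,
\[
rW^1_j(T) = \max_{q_j} \Bigl[ \theta_j(q_j) - D_j(T) + \lambda W^{1\prime}_j(T)\Bigl( q_j + \sum_{i\in S}q^{1*}_i(T) + \sum_{k\notin S,\, k\neq j}q^{1*}_k(T)\Bigr) \Bigr].
\]
Because $\theta(\cdot)$ is strictly concave in $q$ (the coefficient on $-\tfrac12 q^2$ is $\beta>0$), each inner maximization is strictly concave and is characterized by its first-order condition, giving $q^{1*}_i = \alpha_i/\beta_i + \lambda V^{1\prime}(T)/((1+\tau)\beta_i)$ and $q^{1*}_j = \alpha_j/\beta_j + \lambda W^{1\prime}_j(T)/\beta_j$; thus each optimal emission rate is an affine function of the relevant value-function derivative.

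Next I would substitute these first-order conditions back into the HJB equations and posit the quadratic ansatz $V^1(T) = a_V T^2 + b_V T + c_V$ and $W^1_j(T) = a_j T^2 + b_j T + c_j$ for $j\notin S$. Then $V^{1\prime}$ and $W^{1\prime}_j$ are affine in $T$, so all $q^{1*}_i, q^{1*}_j$ are affine in $T$; feeding these affine policies into the quadratic terms $\theta_i, D_i$ and into the (affine $\times$ affine) drift term makes the right-hand side of each HJB equation a quadratic polynomial in $T$. Equating the coefficients of $T^2$, $T^1$, and $T^0$ on both sides reduces the HJB system to a finite system of algebraic equations: the leading coefficients $(a_V,\{a_j\}_{j\notin S})$ solve a coupled system of quadratic equations, the slopes $(b_V,\{b_j\})$ then solve a linear system, and the constants $(c_V,\{c_j\})$ follow directly. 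This shows the quadratic ansatz is self-consistent; a standard verification argument — using concavity of the instantaneous payoffs together with a growth/transversality bound on the candidate value functions — then identifies the constructed quadratics with the true $V^1, W^1_j$ and the affine policies with the true optimal strategies, which is exactly the claim.

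The main obstacle is the coupled quadratic system for the leading coefficients $(a_V,\{a_j\}_{j\notin S})$ produced by the Nash structure: the coalition's drift involves the non-members' slopes $v_j = 2\lambda a_j/\beta_j$, and each non-member's drift involves $v_i = 2\lambda a_V/((1+\tau)\beta_i)$ and the other $v_k$'s, so the equations do not decouple. I would need to single out the economically relevant solution branch — the one for which the value functions are decreasing and concave in $T$ over the operating range, discarding the explosive root, much as one selects the stabilizing solution of a matrix Riccati equation; this is also where the existence-and-uniqueness statement advertised in the introduction would be pinned down. The remaining steps (the linear solve for the slopes and constants, and the verification theorem) are routine once the functional form is in hand.
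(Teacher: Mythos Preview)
Your proposal is correct and follows essentially the same route as the paper: write the stationary HJB equations for the coalition and each non-member, extract the first-order conditions, substitute back, posit a quadratic ansatz for $V^1$ and $W^1_j$, and match coefficients of $T^2$, $T$, and $T^0$ to obtain a coupled algebraic system (quadratic in the leading coefficients, then linear in the remaining ones). The paper's Appendix~A.1 carries this out and records the resulting coefficient formulas explicitly, while Appendix~A.2 addresses precisely the branch-selection issue you flag for the quadratic system in $(a_V,\{a_j\})$; your added remarks on verification/transversality are a useful complement that the paper leaves implicit.
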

\begin{proof}
    The proof of Proposition 1 is provided in Appendix \ref{appA1}. 
\end{proof}
Appendix \ref{appA1} provides a detailed derivation of the coefficients of these quadratic value functions for both member and non-member regions. Besides, we prove the existence and uniqueness of solutions for the quadratic coefficients in the special case (single non-member region) and apply numerical simulations to confirm the unique real solution exists which corresponds to positive discriminant case for general cases in Appendix \ref{appA2}.

\subsubsection{The tipping has not occurred\label{subsec:One-tipping-State-2}}
In State 2, the tipping event has not yet occurred, and the timing of the tipping event, denoted as $t_{1}$, is stochastic. To account for this uncertainty, we model the decision-making process of member and non-member regions using a stochastic dynamic game framework. Regions maximize their expected collective net emissions-associated benefits, conditional on the stochastic nature of the tipping event. Let $V^{2}$ and $W^{2}_{j}$ denote the value functions for member and non-member regions, respectively. The optimization problems for member regions ($i \in S$) and non-member regions ($j \notin S$) are given by: 
\begin{equation}
V^{2} =\max_{\{q_{i}\}_{i \in S}}E\sum_{i\in S}\left[
\begin{split}
&\int_{0}^{t_{1}}e^{-rt}\left(\left(1+\tau\right)\theta_i\left(t\right)-D_i\left(t\right)\right)dt \\
& +\int_{t_{1}}^{\infty}e^{-rt}\left(\left(1+\tau\right)\theta_i\left(t\right)-D_i\left(t\right)-L_i\right)dt
\end{split}
\right]
\end{equation}
\begin{equation}
W^{2}_{j}=\max_{q_{j}} E\left[
\begin{split}
&\int_{0}^{t_{1}}e^{-rt}\left(\theta_j\left(t\right)-D_j\left(t\right)\right)dt\\
&+\int_{t_{1}}^{\infty}e^{-rt}\left(\theta_j\left(t\right)-D_j\left(t\right)-L_j\right)dt
\end{split}
\right]
\end{equation}

To analyze the game, we adopt the Markov strategy, as the Markov perfect equilibrium ensures dynamic consistency and sub-game perfection (\citealp{zhang_hennlock_2018}). 

Proposition \ref{Prop2} formalizes the equilibrium properties of value functions and optimal strategies for coalition members and non-member regions in State 2. 
\begin{proposition}\label{Prop2}    
Before the occurrence of tipping, when $T\leq\underline{T}$, the value functions $V^2$ and $W^2_j$ are quadratic in $T$, and the optimal emissions strategies $q^{2*}_i$ and $q^{2*}_j$ are linear in $T$. When $T>\underline{T}$, the value functions $V^2$ and $W^2_j$ are quadratic or cubic in $T$, and the optimal emissions strategies $q^{2*}_i$ and $q^{2*}_j$are quadratic or linear in $T$.
\end{proposition}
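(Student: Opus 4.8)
The plan is to solve State~2 by the same undetermined-coefficients route used for State~1, after splitting the temperature axis at $\underline T$. First I would write the Hamilton--Jacobi--Bellman system. For the coalition,
\[
rV^{2}(T)=\max_{\{q_i\}_{i\in S}}\Big\{\sum_{i\in S}\big[(1+\tau)\theta_i(q_i)-D_i(T)\big]+\lambda\big(\textstyle\sum_{i\in S}q_i+\sum_{j\notin S}q^{2*}_j\big)\dfrac{dV^{2}}{dT}+H(T)\big(\widehat V^{1}(T)-V^{2}(T)\big)\Big\},
\]
with the analogous equation for each $W^{2}_j$, where $\widehat V^{1}=V^{1}-\sum_{i\in S}L_i/r$ and $\widehat W^{1}_j=W^{1}_j-L_j/r$ are the post-tipping continuation values --- the quadratics of Proposition~\ref{prop1} shifted by the discounted tipping loss. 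Since each $\theta_i$ is quadratic in $q_i$, the first-order conditions give $q^{2*}_i$ affine in $dV^{2}/dT$ and $q^{2*}_j$ affine in $dW^{2}_j/dT$; substituting them back turns the problem into a coupled system of first-order ODEs whose right-hand sides are quadratic in the gradients and carry $T$-polynomial coefficients (degree $2$ from the damage terms, degree $1$ from the hazard factor $H$).

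On $T\leq\underline T$ we have $H(T)\equiv0$, so the hazard terms vanish and the ODE system is formally identical to the one solved for State~1 in Appendix~\ref{appA1}. Hence the quadratic ansatz $V^{2}=a_2T^{2}+a_1T+a_0$, $W^{2}_j=b_{2j}T^{2}+b_{1j}T+b_{0j}$ solves it: every term is then quadratic in $T$, and matching the coefficients of $T^{2},T,1$ reduces the ODEs to an algebraic system (the leading-order equations being of Riccati type, i.e.\ quadratic in the top coefficients), with $q^{2*}_i,q^{2*}_j$ correspondingly linear in $T$.

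On $T>\underline T$ we have $H(T)=\chi(T-\underline T)$, so the hazard term $\chi(T-\underline T)(\widehat V^{1}(T)-V^{2}(T))$, through the quadratic $\widehat V^{1}$, injects a cubic-in-$T$ inhomogeneity. I would first retry the quadratic ansatz: it still closes precisely when the induced $T^{3}$ coefficient cancels, i.e.\ when the leading coefficient of $V^{2}$ (a Riccati root inherited from $T\leq\underline T$) equals that of $V^{1}$, and then $V^{2}$ is quadratic and $q^{2*}$ linear. When it does not cancel, raise the ansatz to cubic, $V^{2}=a_3T^{3}+\dots$ (and the $W^{2}_j$ cubic), so that $V^{2\prime}$ is quadratic and $q^{2*}_i,q^{2*}_j$ become quadratic in $T$; matching coefficients degree by degree again collapses the ODEs to an algebraic system whose top-degree equations are of Riccati type. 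In either sub-case the two pieces and the remaining free constants are tied together by value matching at $T=\underline T$ (matching of the derivative then being automatic, as $H$ is continuous there) together with a growth/transversality condition as $T\to\infty$.

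The step I expect to be the obstacle is the $T>\underline T$ case: I must show that lifting the polynomial degree by exactly one suffices to absorb the cubic hazard inhomogeneity without generating unmatched higher-degree residuals --- equivalently, that the top-degree Riccati-type equations of the cubic system are consistent and do not over-determine the remaining coefficients --- and that the resulting algebraic system admits a solution for which the value functions are convex and the emission strategies lie on the economically admissible branch (positive discriminant / correct root), exactly as for State~1. Because the regions are heterogeneous these are systems of simultaneous quadratic equations rather than scalar Riccati equations, so I would establish existence and uniqueness in closed form for the tractable case of a single non-member and confirm it numerically in general, paralleling the treatment in Appendix~\ref{appA2}.
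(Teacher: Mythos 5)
Your proposal follows essentially the same route as the paper's Appendix~\ref{appA3}: write the HJB system with the hazard term, note that $H\equiv 0$ for $T\le\underline T$ reduces it to the State-1 system (hence quadratic value functions and linear strategies), and for $T>\underline T$ observe that the linear-in-$T$ hazard factor multiplying the quadratic post-tipping continuation value forces a quadratic-or-cubic ansatz whose leading terms balance. You are in fact somewhat more explicit than the paper --- which only checks that the highest-degree terms on both sides agree --- in spelling out when the quadratic ansatz closes and in flagging the matching conditions at $T=\underline T$; the paper likewise stops short of a full closed-form solution and defers to the Chebyshev numerics.
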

\begin{proof}
The proof of Proposition \ref{Prop2} is provided in Appendix \ref{appA3}. 
\end{proof}

Consequently, closed-form solutions cannot be derived analytically. In Section 3, we present a numerical solution process to determine the equilibrium strategies for member and non-member regions prior to the climate tipping event. This approach employs Chebyshev polynomial approximation methods to solve the system of HJB equations (See the Appendix \ref{appA4}), ensuring computational tractability while maintaining the rigor of the theoretical framework. 

\subsection{Decisions with two tippings}
\textbf{Two Sequential tipping events}: We extend our analysis to incorporate two sequential climate tipping events, where the occurrence of the second tipping event is conditional on the first. Let $\underline{T}_{1}$ and $\underline{T}_{2}$ denote the respective temperature thresholds at which the hazard rates for the first and second tipping events become non-zero. The hazard functions are specified as: 
$$H_{1}\left(T\right)=\chi_{1}\max\left(0,T-\underline{T}_{1}\right)$$ 
$$H_{2}\left(T\right)= 
\begin{cases}
0, & if\ the \ first\ tipping\ has\ not\ occurred \\
\chi_{2}\max\left(0,T-\underline{T}_{2}\right), & if\ the\ first\ tipping\ has\ occurred
\end{cases}$$
\noindent where $\chi_{1}$ and $\chi_{2}$ represent the hazard rate parameters for each tipping event. The associated climate damages are denoted by $L^{1}$ and $L^{2}$ for the first and second tipping events respectively. 

\textbf{State Characterization}: At any time $t$, the system may occupy one of three distinct states:

\textit{State 1}: Both tipping events have occurred ($t\ge t_{2}$)

\textit{State 2}: First tipping has occurred but not the second ($t_{1}\leq t<t_{2}$)

\textit{State 3}: No tipping events have occurred ($t<t_{1}$)

\noindent where $t_{1}$ and $t_{2}$ represent the stochastic realization times of the first and second tipping events, with $t_{1}<t_{2}$. 

\textbf{Analytical and Numerical Solution Approach}: For State 1 (both tipping events realized), the value functions maintain quadratic forms in temperature $T$, analogous to the single tipping event case. The solution methodology presented in Appendix \ref{appA1} remains applicable, with the complete analytical derivation provided in Supplement S2.1. 

For States 2 and 3 (representing intermediate stages of tipping), the value functions lose their quadratic structure,  we adapt our numerical framework from the single tipping event case through recursive solution methods that account for state transitions. The complete numerical algorithm, including convergence criteria and implementation details, appears in Supplements S2.2 and S2.3. This approach maintains theoretical consistency while providing computational tractability.

The framework naturally generalizes to cases with more than two tipping events through additional state variables tracking tipping event occurrences. However, we focus on the two-tipping-point case since it captures the essential economic dynamics of sequential climate shocks. 

\subsection{Payment transfer and coalition stability}
Member regions within coalition $S$ act collectively to maximize their collective net emissions-associated benefits. For a coalition comprising $m$ heterogeneous regions ($m\leq n$, where $m=\left|S\right|$), we first derive the optimal emission strategy $q_{i}^{*}(T)$ for each member region $i\in S$. The individual net benefits are then determined through intra-coalition payment transfers, for which we analyze two distinct allocation mechanisms:

\textbf{$\gamma$-core allocation}: Building on cooperative game theory concepts (\citealp{chander2007gamma}), the $\gamma$-core allocation satisfies:
\begin{equation}
V_{i,\gamma}=W_{i,S\setminus\{i\}}+\frac{\Lambda_i}{\sum\limits_{i\in S}\Lambda_i}\left(V-\sum\limits_{i\in S}W_{i,S\setminus\{i\}}\right)
\end{equation}
\noindent where $\Lambda_i = \left(V-V_{S\setminus\{i\}}\right)-W_{i,S\setminus\{i\}}\ge 0$; $V$ represents coalition $S$'s collective value, $W_{i,S\setminus\left\{ i\right\} }$ denotes region $i$'s value after leaving coalition $S$; and $V_{S\setminus\left\{ i\right\} }$ is the coalition value without region $i$. This mechanism ensures each member receives at least its outside option while proportionally sharing the coalition's surplus. 

\textbf{Shapley value allocation} (\citealp{shapley1953value}): The Shapley value allocation distributes benefits according to:  
\begin{equation}
V_{i,SV}=\sum_{S_s\in S(i)}\frac{\left(\left|S_s\right|-1\right)!\left(m-\left|S_s\right|\right)!}{m!}\left[V_{S_s}-V_{S_s\setminus\left\{ i\right\} }\right]
\end{equation}
\noindent where $S(i)$ comprises all subsets containing region $i$ of $S$, and $|S_s|$ denotes coalition size. 

Following the standard approach in IEA literature, we define stability through two conditions:

\textit{Internal Stability}: $V_{i}\left(T\right)\geq W_{i,S\setminus\left\{ i\right\}}\left(T\right), \forall i\in S$

\textit{External Stability}: $V_{j,S\cup\left\{ j\right\}}\left(T\right)\leq W_{j}\left(T\right), \forall j\notin S$

These conditions ensure that: No member has incentive to leave (internal stability); No non-member has incentive to join (external stability). 

Our framework captures the endogenous evolution of stable coalitions through temperature-dependent value functions, state-contingent payoff structures, and continuous re-evaluation of stability conditions. We compute all possible stable coalitions at each temperature level to track coalition transitions across climate states. While some studies suggest multiple agreements may enhance cooperation (e.g., \citealp{ASHEIM200693}; \citealp{RePEc:spr:climat:v:144:y:2017:i:1:d:10.1007_s10584-015-1511-2}), we maintain the standard single-coalition framework due to empirical prevalence of unitary agreements (e.g., Paris Accord), and theoretical ambiguities in multiple-equilibrium settings (\citealp{hagen2019climate}). 

\subsection{Formation of a grand coalition}
The challenge of sustaining full participation in climate coalitions has prompted extensive debate between two policy approaches: 1) \textbf{External sanctions} (\citealp{Nordhaus2015}), which advocates imposing trade penalties on non-participants to create negative incentives for free-riding, however, it raises risk in trade distortions and retaliation. 2) \textbf{Internal incentives} (\citealp{paroussos2019climate}), which advocates technology-sharing arrangements to create positive inducements for participation, thus enhances coalition productivity. Our theoretical analysis on homogeneous-region (Appendix \ref{appC}) reveals an equivalence: when technology-sharing ($\tau$) and sanction levels are calibrated to provide equivalent marginal benefits, their effects on coalition participation are identical. Moreover, we will numerically show that, technology-sharing consistently supports larger coalitions than sanctions. 

Meanwhile, also through homogeneous-region analysis (Appendix \ref{appD}), we identify two countervailing effects: 1) \textbf{Temperature Rise Effect}, which reduces participation incentives since free-riding benefits grow faster than coalition benefits; and 2) \textbf{Tipping Event Effect}, which temporarily increases cooperation and acts as coordination-facilitating shock. The net effect shows coalition fragility increasing with temperature, numerical simulations with heterogeneous regions confirm these dynamics. 

For a grand coalition $\bar{S}$ with $n$ members, stability requires: 

i. \textit{Individual Rationality}: $V_{i}\left(T\right)\geq W_{i,\bar{S}\setminus\left\{ i\right\}}\left(T\right), \quad \forall i\in \bar{S}$

ii. \textit{Collective Rationality}: $V_{\bar{S}}\left(T\right)\geq \sum\limits_{i=1}^{n}W_{i,\bar{S}\setminus\left\{ i\right\}}\left(T\right)$

\noindent where the latter represents the aggregate free-riding temptation. 

We define the stability gap as:
$\Psi\left(\bar\tau,T\right)=V_{\bar S}\left(\bar\tau,T\right)-\sum_{i=1}^n W_{i,\bar{S}\setminus\{i\}}\left(T\right)$, if $\Psi(0,T)\geq 0$, then $\bar\tau = 0$ (no technology-sharing is required). Otherwise, there can find a minimum technology-sharing rate $\hat{\tau}$ that stabilizes the grand coalition at temperature $T$, satisfying ($\Psi(\hat{\tau},T)=0$). In our numerical illustration under $\gamma$-core allocation, we calculate $\hat{\tau}\left(T\right)$ iteratively using a sufficiently small step size at discrete temperature intervals. This approach identifies a dynamic technology-sharing pathway that evolves over time with rising temperatures.  

The model's forward-looking dynamics are closed through the temperature-emission feedback loop. At each time $t$, we compute optimal emissions ($q_{i,t}^*$, $q_{j,t}^*$) and values ($V_t$, $W_{j,t}$), and determines endogenous coalition formation ($S_t$). Then we update temperature trajectories $T_{t+1}$ via climate dynamics and repeat the above calculation until a given terminal time is reached ($t\ge t_{terminal}$). 

\section{Numerical solution and parametrization}
\subsection{Chebyshev polynomial approximation}
For single tipping event cases, the value functions $V^{1}(T)$ and $W^{1}_{j}\left(T\right)$ at climate tipping state 1 (post-tipping) admit analytical solutions (see Appendix \ref{appA1}). These incorporate tipping damages. The value functions $V^{2}\left(T\right)$ and $W^{2}_{j}\left(T\right)$ at state 2 (pre-tipping) require numerical treatment, when $T>\underline{T}$. For $T\le\underline{T}$, the hazard rate $H(T)=0$, allowing analytical solutions (Appendix \ref{appA1}). These mirror the derivation for state 1 but exclude tipping damages. For $T>\underline{T}$, the HJB equations are solved numerically via Chebyshev polynomial approximation (Appendix \ref{appA4}). The solution for two tipping events case follows the same methodology, with additional states accounted for in the approximation. The Chebyshev regression algorithm is also applied in Supplements S2.2 and S2.3. For foundational references on the method, see \citet{judd1998numerical} and \citet{cai2013shape}.

\subsection{Parametrization and cases}
Our model calibration draws on the regional economic and climate projections from the RICE-2010 model (\citealp{Nordhaus_RICE_2010}), which comprises 12 aggregated regions (detailed in Supplement S4). Regional GDP (converted to trillion PPP USD) and emissions (gigatons $\text{CO}_2$) form the basis for estimating our benefit function parameters. We employ ordinary least squares to calibrate the continuous climate damage function coefficients, with technology-sharing effects benchmarked to 2005 values.

The benefit function parameters ($\alpha$, $\beta$) are derived through a three-step procedure: 1) establishing an $\alpha$-$\beta$ relationship using 2015 GDP-emissions data; 2) estimating $\beta$ via first-order conditions assuming cooperative optimization in RICE; 3) estimating $\alpha$ through least squares regression. This approach ensures consistency between our emissions-associated benefits and RICE's marginal damage valuations. 

Key parameters include a 2.5$\%$ discount rate ($r$), 0.0035 hazard rate ($\chi$), and 0.0021 TCRE coefficient ($\lambda$)\citep{matthews_proportionality_2009,DICE2016, cai2025dynamics}. Tipping losses are modeled as percentage output reductions in 2100 RICE projections. Following the literature, we consider fixed membership \citep{petrosjan2003time} coalition instead of flexible membership \citep{rubio2007infinite} since the model is dynamically forward updating through the temperature-emission feedback loop along with time. The cases setting is presented in Table \ref{tab1}.
To determine stable coalition compositions, we integrate regional value functions with internal and external stability conditions. Since multiple stable coalitions may emerge within a state during numerical calculations, we select the configuration that maximizes overall benefits among regions, having verified through robustness checks (including random selection of the largest viable coalition) that alternative selection criteria do not materially affect our core conclusions. 
\begin{table}[H]
\caption{Case Settings}
\label{tab1}
\small
\centering
\begin{tabular}{c|c}
\hline
\multicolumn{2}{c}{One Tipping} \\ \hline
\multicolumn{1}{c}{Tipping Loss $L$ (in \% of welfare in 2100 in RICE)}    & 1\%, 4\%, 8\%    \\ 
\multicolumn{1}{c}{technology-sharing Rate $\tau$ for Joining a Coalition (\%)}    & 1\%, 5\%, 10\%    \\ 
\multicolumn{1}{c}{Initial Possible Tipping Year}    & 2030, 2050, 2070    \\ 
\multicolumn{1}{c}{Transfer Payment Mechanism}   
& Shapley, $\gamma$-core    \\ \hline
\multicolumn{2}{c}{Two Tippings} \\ \hline
\multicolumn{1}{c}{Tipping Losses ($\left[ L^{1}, L^{2}\right]$, in \%
of welfare in 2100 in RICE)}    & [1\%,2\%], [2\%,4\%], [4\%,8\%]    \\ 
\multicolumn{1}{c}{technology-sharing Rate $\tau$ for Joining a Coalition (\%)}    &  1\%, 5\%, 10\%   \\ 
\multicolumn{1}{c}{Initial Possible Tipping Year [$Year_1$, $Year_2$]}    & [2030, 2050], [2050, 2070]    \\ 
\multicolumn{1}{c}{Transfer Payment Mechanism}   
& Shapley, $\gamma$-core    \\ 
\hline
\end{tabular}
\end{table}

\section{Results}
Our analysis tracks stable coalition formations alongside temperature increases from 2020 to 2100 across all cases, revealing dynamic patterns of coalition emergence and dissolution as temperatures rise. Prior to presenting these numerical results, we evaluate the two payment transfer mechanisms (detailed in Supplement S3), finding that while both Shapley and $\gamma$-core allocations are theoretically valid, the $\gamma$-core approach proves more effective at sustaining cooperation in practice. Consequently, we focus our reported results exclusively on $\gamma$-core implementations. To ensure robustness and demonstrate that our findings extend beyond single-tipping-point cases, we have systematically examined cases involving two sequential tipping events, which confirm the generalizability of our core conclusions. 

\subsection{Stable coalition and benefit}
Figure \ref{NF1}a demonstrates the dynamic relationship between temperature rise and coalition stability over time, revealing a counterintuitive pattern: as temperatures increase, coalitions systematically shrink rather than strengthen. Under a 1$\%$ technology-sharing rate ($\bar{\tau}=1\%$), the initial coalition (comprising China, US, EU, Japan, Russia, India, LatAm, Eurasia, and OHI) remains stable until 2026, when membership changes occur. The 2050 tipping event temporarily reverses this trend, triggering universal participation, but subsequent withdrawals (OthAsia in 2054, MidEast in 2095) ultimately lead to coalition shrink. This phenomenon directly contradicts conventional expectations that worsening climate conditions would naturally incentivize greater cooperation. 

As shown in Figure \ref{NF1}b, all regions experience declining emissions-associated benefits as temperatures rise, regardless of their coalition participation status. Notably, regions consistently gain immediate benefits upon withdrawing from the coalition (e.g., India's 2027 benefit spike), confirming the persistent advantage of free-riding behavior. While the tipping event temporarily expands cooperation by creating a shared crisis response, this unity proves fleeting as free-riding incentives reassert themselves. The technology-sharing exhibits expected positive effects on coalition stability, with higher values delaying coalition shrink. However, this stabilizing influence proves insufficient to overcome the fundamental free-riding incentives that intensify with climate deterioration. These findings are further supported by our analytical proofs in the homogeneous-region case (Appendix \ref{appD}), demonstrating that the coalition shrink phenomenon is robust across model specifications.

Stable coalition dynamics under two tipping events resemble the single-tipping case, with gradual shrink amid rising temperatures interrupted by transient expansions at each tipping point (Figure \ref{NF2}a). With $\bar{\tau}=1\%$, the initial coalition (China, US, EU, Japan, Russia, Eurasia, OHI) shrinks by 2029 after Russia’s exit, but temporarily reforms post-first tipping in 2030, adding Russia and LatAm while losing Eurasia. The second tipping in 2050 again briefly stabilizes cooperation before subsequent departures (OthAsia in 2053, MidEast in 2094) reinforce free-riding incentives.  Crucially, the added tipping threat exacerbates coalition fragility, altering both the size and trajectory of stable coalitions. Consistent with prior findings, regional emissions-associated benefits decline monotonically with rising temperatures (Figure \ref{NF2}b), highlighting the enduring conflict between collective action and free-riding even under escalating climate risks. 

Our comparative analysis reveals that the presence of tipping events influences regional participation decisions in climate coalitions (Figure \ref{Aplot9}). When accounting for tipping threats, stable coalitions formed prior to tipping events are consistently the same size or smaller than those in cases without tipping considerations—a trend that intensifies with multiple potential tipping events. Conversely, coalitions established after tipping events are notably larger than their non-tipping counterparts. These findings align with \citet{BARRETT2013235}, who argue that when tipping thresholds are uncertain, regions face heightened incentives to free-ride, thereby impeding the formation of large-scale cooperation. The results underscore how the mere risk of catastrophic climate damages—even before materialization—can destabilize collective action by reinforcing free-riding over collective benefits.
\begin{figure}[H]
\includegraphics[width=\linewidth, keepaspectratio]{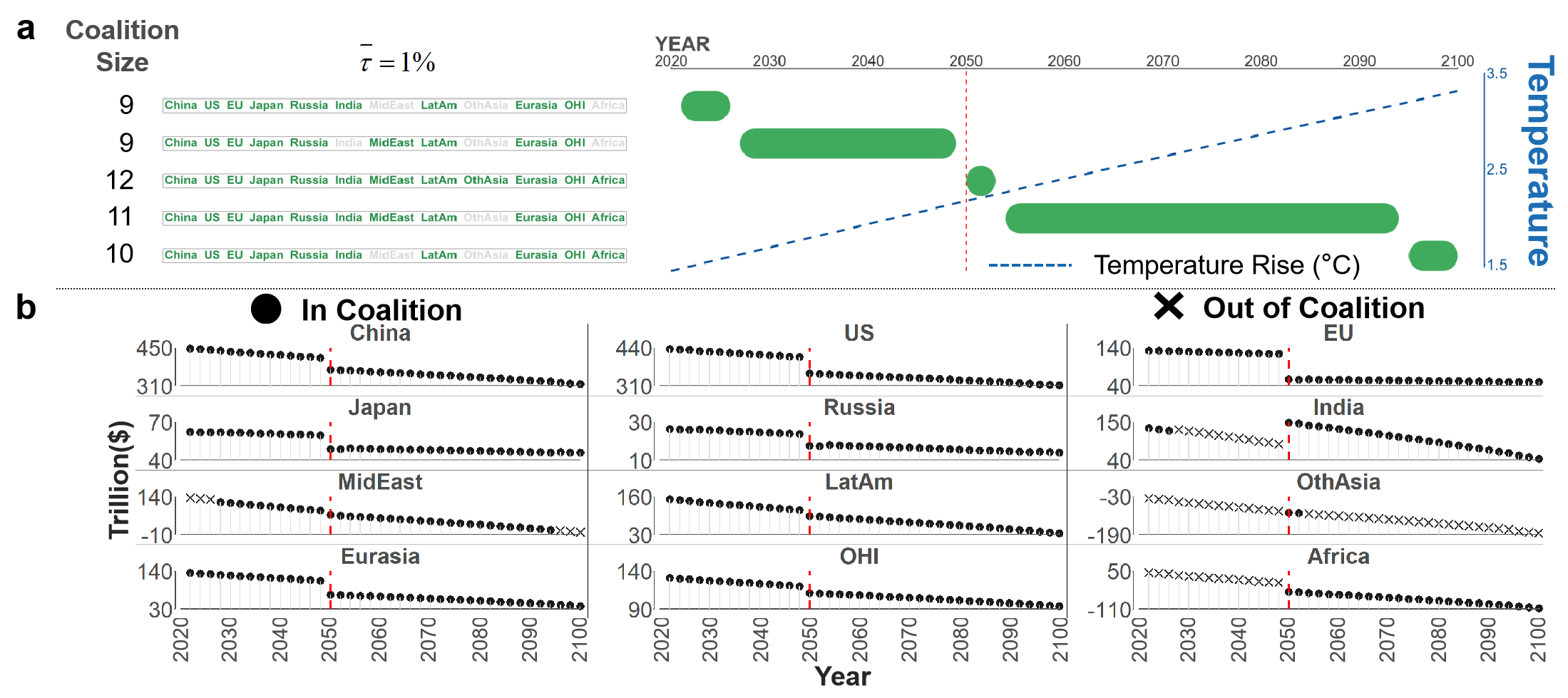}
\caption{\textbf{Climate coalition dynamics under uniform tipping losses ($L=4\%$) under $\gamma$-core allocation ($\bar{\tau}=1\%$)}\\Note: a. Temperature trajectory (blue dashed line) and coalition membership (green labels: members; gray: non-members); b. Regional emissions-associated benefits (circles: members; crosses: non-members), with red vertical lines marking the 2050 tipping event.}
\label{NF1}
\end{figure}
\begin{figure}[H]
\includegraphics[width=\linewidth, keepaspectratio]{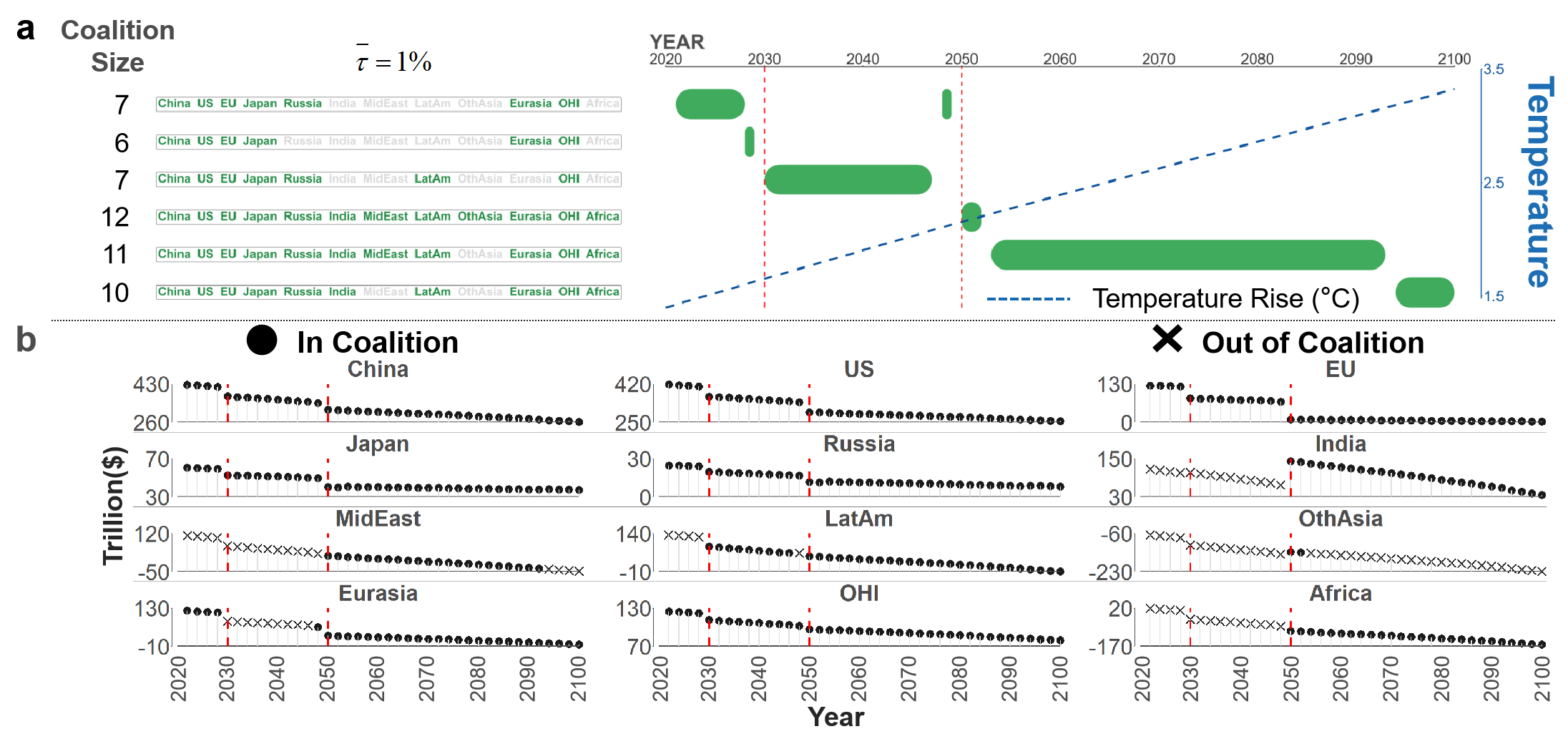}
\caption{\textbf{Climate coalition dynamics with sequential tipping events ($L^1=2\%$ and $L^2=4\%$) under $\gamma$-core allocation ($\bar{\tau}=1\%$)} \\Note: a. Temperature trajectory (blue dashed line) and coalition membership (green labels: members; gray: non-members); b. Regional benefits (circles: members; crosses: non-members), with red vertical lines marking tipping events in 2030 and 2050.}
\label{NF2}
\end{figure}
\begin{figure}[H]
\vspace{-10pt}
\includegraphics[width=\linewidth, keepaspectratio]{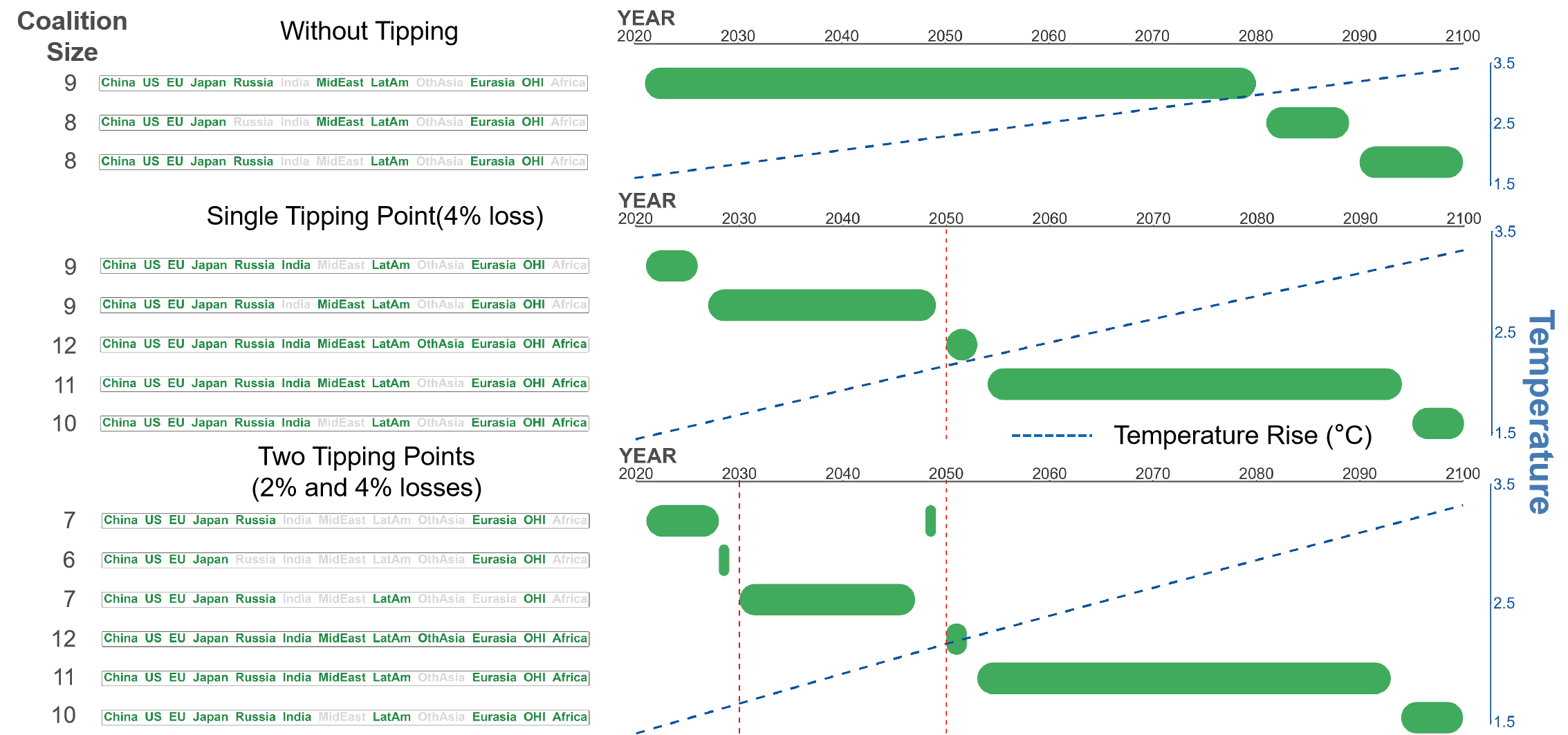}
\caption{\textbf{Climate coalition dynamics without tipping, and with one and two tippings, under $\gamma$-core allocation ($\bar{\tau}=1\%$)}\\ Note: Blue dashed lines: temperature trajectory; Red vertical lines: occurrence of tipping events; Coalition membership shown on y-axis (green: members; gray: non-members).}
\label{Aplot9}
\end{figure}

\subsection{Full cooperation and non-cooperation}
Our analysis of net emissions-associated benefits reveals a persistent benefit gap between full cooperation and non-cooperation cases, with cooperative actions consistently yielding higher collective benefits across all tested parameters (technology-sharing rates, tipping event frequencies, and tipping losses). While tipping events temporarily attenuate this differential (Figures \ref{NF3}), the cooperative advantage rebounds at post-tipping period and amplifies with rising temperatures—a pattern that starkly contrasts with the observed coalition shrink dynamics. This divergence underscores the fundamental conflict between individual and collective rationality: regions recognize the superior collective payoff from full cooperation yet increasingly defect to pursue larger individual gains through free-riding as climatic conditions deteriorate. Our findings dynamically validate \citet{barrett1994self}'s "paradox of cooperation," which posits that the greater the benefit from cooperation, the more difficult it becomes to form a large coalition. 
\begin{figure}[H]
\includegraphics[width=\linewidth, keepaspectratio]{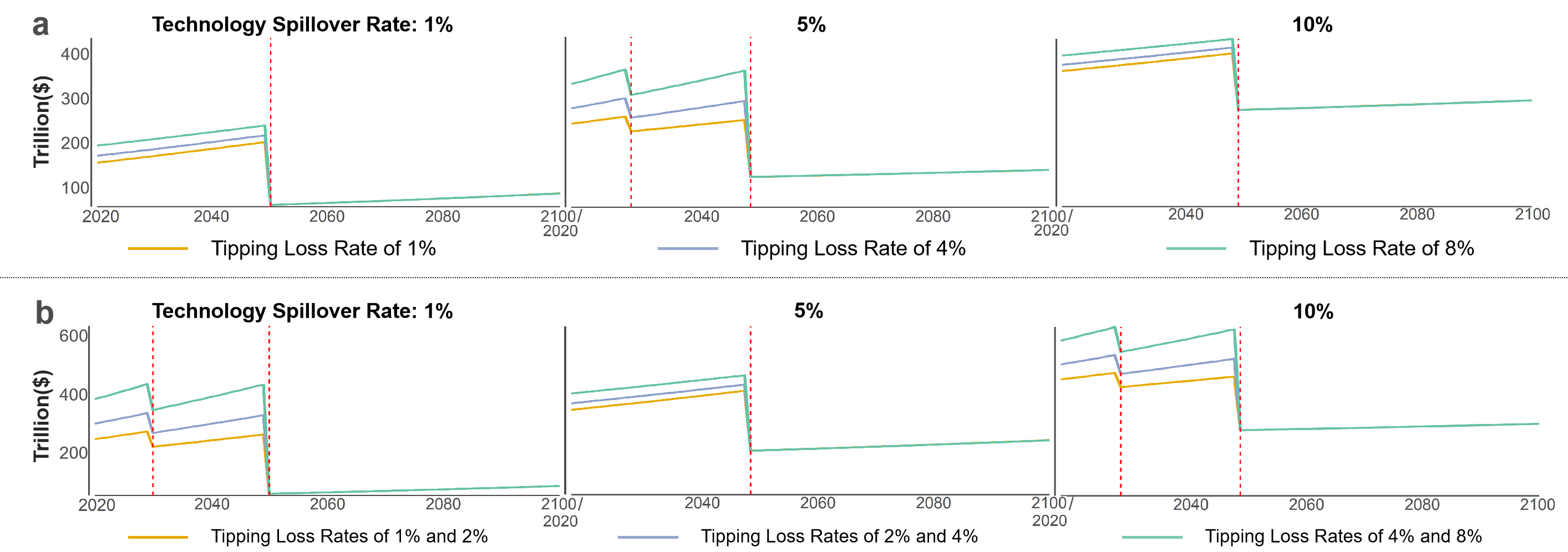}
\caption{\textbf{Difference in net emissions-associated benefits between full cooperation and non-cooperation under various technology-sharing rates and tipping loss rates}\\Note: a. Single tipping event cases; b. Two tipping events cases. Red vertical lines mark tipping event occurrences.}
\label{NF3}
\end{figure}

\subsection{Technology-sharing v.s. Sanction}
While scholars have advocated for sanctions as a mechanism to overcome the cooperation paradox (\citealp{bahn2009stability}; \citealp{breton2010dynamic}; \citealp{Nordhaus2015}), our analysis of heterogeneous regions reveals the superior efficacy of technology-sharing. Our theoretical model demonstrates equivalence between sanctions and technology-sharing under homogeneous assumptions, and empirical results show advantages for technology-sharing in heterogeneous settings. With a single tipping event (Figure \ref{Aplot3}a), technology-sharing sustains grand coalitions longer than equivalent-level sanctions while generating higher net emissions-associated benefits (Figure \ref{Aplot3}b). This pattern persists in two-tipping case, where technology-sharing maintains larger stable coalitions prior to the second tipping event and delivers greater collective benefits (Figure \ref{Aplot1}a and b).

The inherent advantages of technology-sharing stem from its dual capacity to: (1) enhance internal coalition stability through productive gains rather than punitive measures, and (2) avoid the trade conflicts associated with sanctions (\citealp{ernst2023carbon}; \citealp{hagen2024political}). By creating positive-sum incentives rather than zero-sum penalties, technology-sharing emerges as both a more rational policy tool for coalition expansion and a more efficient mechanism for maximizing collective benefits. These findings suggest that an international climate agreements should prioritize cooperative technology frameworks over punitive trade measures to achieve lasting, large-scale participation.
\begin{figure}[H]
\includegraphics[width=\linewidth, keepaspectratio]{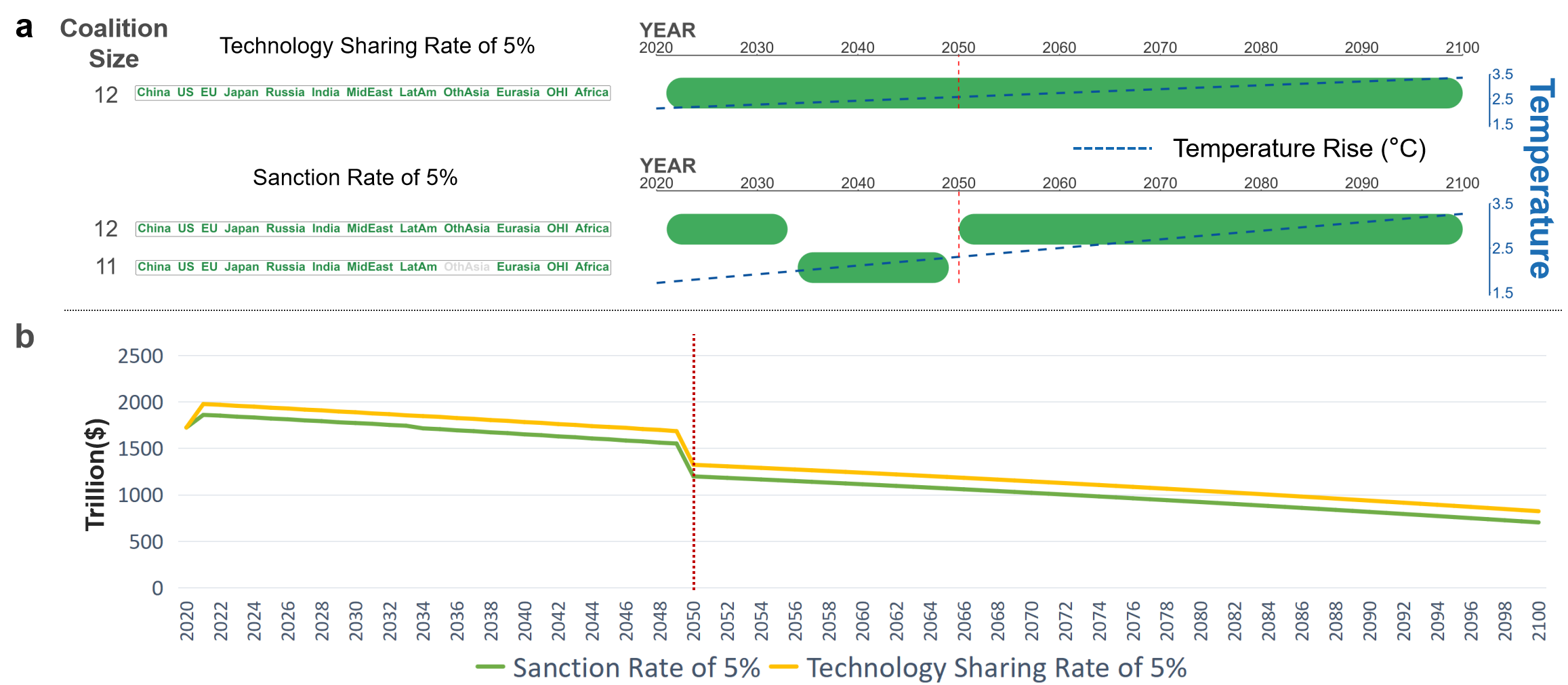}
\caption{\textbf{Climate coalition dynamics with 4$\%$ uniform tipping losses and 5$\%$ technology-sharing/sanction rates under $\gamma$-core allocation} \\Note: a. Temperature trajectory (blue dashed line) and coalition membership (green labels: members; gray: non-members), with red vertical lines marking the 2050 tipping event; b. collective coalition benefits (green line: sanction; orange: technology-sharing).}
\label{Aplot3}
\end{figure}
\begin{figure}[H]
\includegraphics[width=\linewidth, keepaspectratio]{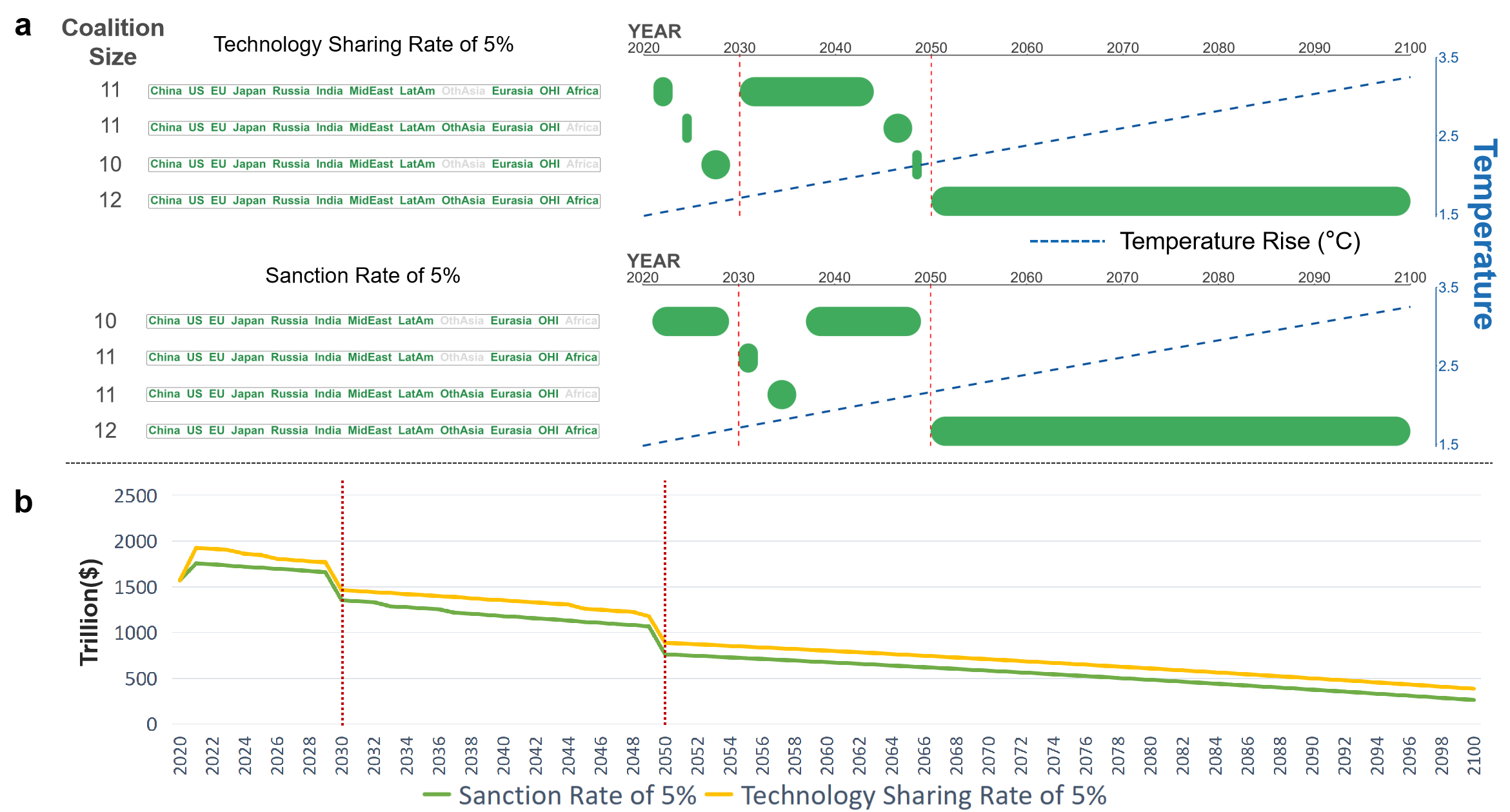}
\caption{\textbf{Climate coalition dynamics with 2$\%$ and 4$\%$ subsequent tipping losses and 5$\%$ technology-sharing/sanction rates under $\gamma$-core allocation} \\Note: a. Temperature trajectory (blue dashed line) and coalition membership (green labels: members; gray: non-members), with red vertical lines marking the 2030 and 2050 tipping events; b. collective coalition benefits (green line: sanction; orange: technology-sharing).}
\label{Aplot1}
\end{figure}

\subsection{Path of minimum technology-sharing rates}
Our analysis proposes a novel solution to "the cooperation paradox" by identifying a dynamic path of minimum technology-sharing rates ($\hat{\tau}$) required to sustain a stable grand coalition. Figure \ref{NF4} demonstrates that while higher fixed technology-sharing rates slow global temperature rise, our flexible $\hat{\tau}$ path achieves preferred results by adapting to changing climate conditions, thus it maintains the slowest temperature increase. The minimum technology-sharing rate drops sharply upon a tipping event because the elimination of associated tipping risks narrows the stability gap for the grand coalition, reducing the free-riding incentives for non-members. The $\hat{\tau}$ path thus serves both as a policy tool and a diagnostic metric, quantifying the growing challenge of preserving cooperation as climate impacts intensify. 
\begin{figure}[H]
\includegraphics[width=\linewidth, keepaspectratio]{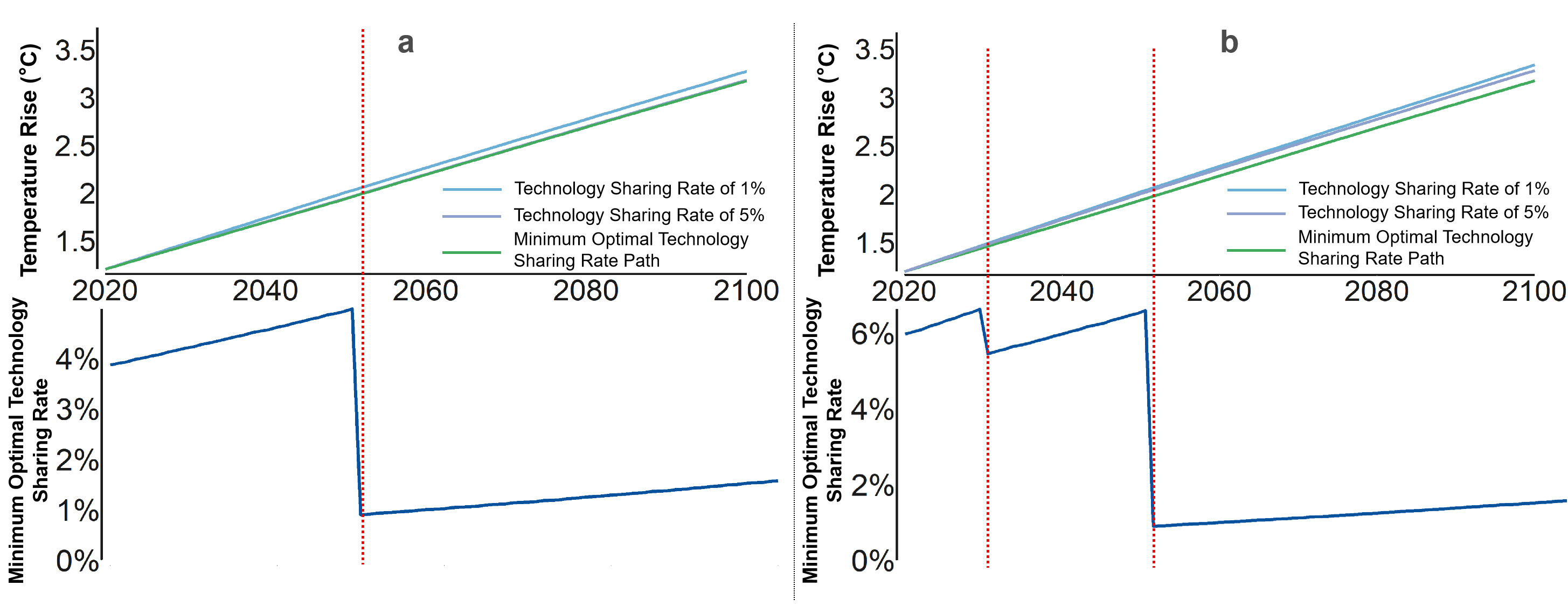}
\caption{\textbf{Minimum technology-sharing rates ($\hat{\tau}$) for grand coalition stability and corresponding temperature trajectories} \\Note: a. Single tipping event cases (4$\%$ loss rate in 2050); b. Two tipping events cases (2$\%$ and 4$\%$ loss rates in 2030 and 2050, respectively). Red vertical lines mark tipping event occurrences.}
\label{NF4}
\end{figure}

\subsection{Possible sequence of joining a coalition}
Our analysis of coalition formation across all cases reveals consistent patterns in regional participation. By aggregating coalition membership at each possible coalition size (across various temperature levels, tipping losses, and technology-sharing rates), we identify a statistical hierarchy in regional participation likelihoods (Figure \ref{NF6}). The results demonstrate that: 1) dyadic coalitions invariably form between the US and China; 2) triadic coalitions most frequently include these two nations plus OHI (Other High Income countries); and 3) subsequent expansion follows a predictable sequence: EU $\rightarrow$ Japan $\rightarrow$ Eurasia $\rightarrow$ Russia $\rightarrow$ LatAm $\rightarrow$ MidEast $\rightarrow$ India $\rightarrow$ Africa $\rightarrow$ OthAsia. This participation ordering persists regardless of exogenous parameter variations, suggesting that coalition formation depends primarily on intrinsic regional characteristics rather than external conditions. These empirically derived participation patterns provide valuable insights for predicting probable coalition formation pathways during international bargaining processes.
\begin{figure}[H]
\includegraphics[width=\linewidth, keepaspectratio]{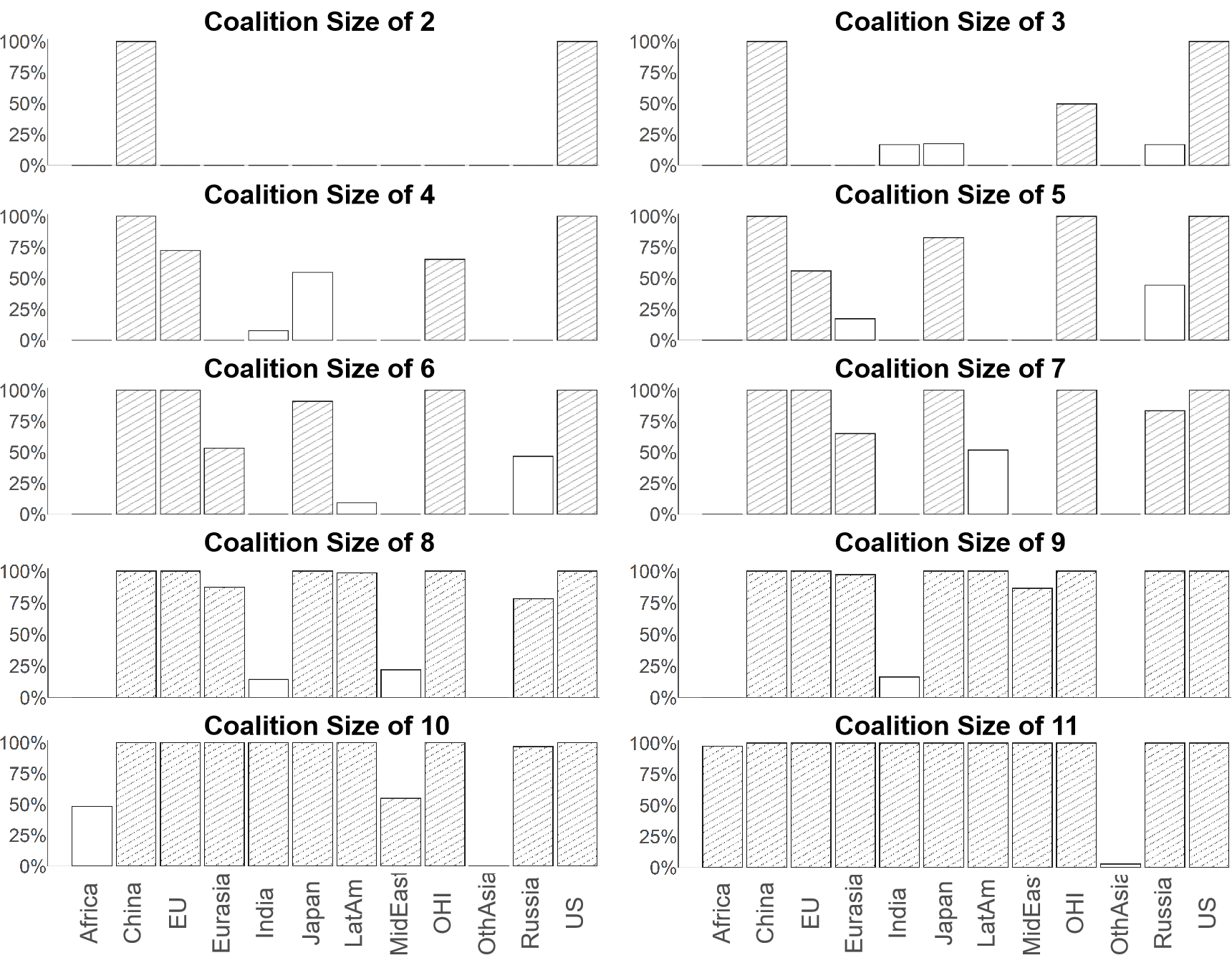}
\caption{\textbf{Regional climate cooperation likelihoods across coalition sizes} \\Note: Columns with slashes indicate, given the coalition size, the regions most likely to participate.}
\label{NF6}
\end{figure}

We comprehensively evaluate model outcomes under various parameters—including technology-sharing rates, tipping losses, and tipping event occurrences. The numerical simulations show that the observed phenomena, including: coalition shrink dynamics with rising temperatures, temporary cooperation spikes at post-tipping period, and superior performance of technology-sharing v.s. sanctions, persist across all cases. This robustness underscores the theoretical validity of our model and its applicability to real-world climate negotiation contexts. 

\section{Conclusions}
Our dynamic game model advances the analysis of climate coalition stability by incorporating both tipping event risks and technology-sharing mechanisms. The framework reveals how climate cooperation evolves under increasing climate pressures, empirically validating \citet{barrett1994self}'s "paradox of cooperation". Tipping events temporarily expand coalitions but cannot prevent their eventual shrink as temperatures rise. This counterintuitive pattern stems from accelerating free-riding incentives - non-members reap disproportionate benefits from others' emission reductions. Benefit allocation mechanisms cannot overcome this fundamental imbalance, as demonstrated by our comparison of full cooperative versus non-cooperative cases. These results challenge the prevailing cognition that worsening climate conditions naturally foster greater cooperation.

We propose technology-sharing as a superior alternative to sanction measures for sustaining grand coalitions. By calculating the minimum required technology-sharing rates, we identify a dynamic solution where $\hat{\tau}$ increases with temperature to offset growing free-riding incentives, yet discontinuously decreases at post-tipping period due to renewed cooperation impulses. This pathway serves dual purposes: quantitatively measuring the free-rider challenge while providing actionable policy guidance. Unlike sanctions that reduce overall benefits (\citealp{ernst2023carbon}), technology-sharing expands the cooperative benefit pool through productivity gains, creating positive-sum incentives that align individual and collective rationality. Our regional participation patterns - remarkably consistent across cases - further suggest that coalition building should prioritize nations with the highest inherent cooperation propensity (e.g., US and China initially), leveraging their participation as a catalyst to stimulate broader involvement.

The persistent gap between theoretical cooperation potential and actual negotiation outcomes - evident in the Paris Agreement's struggles - underscores the need for mechanisms that structurally reward participation. Our results suggest that technology-shared coalitions could break this impasse by creating tangible economic benefits for members that outweigh free-riding gains, and establishing predictable expansion pathways based on empirical participation hierarchies. While challenges remain in implementing such systems, particularly in intellectual property rights and technology transfer governance, this approach offers a scientifically-grounded framework for achieving the global climate governance scale required to avoid catastrophic warming trajectories. Future research should explore institutional designs that operationalize these insights within real-world negotiation contexts.

\bibliographystyle{aer}
\bibliography{ClimClubTip}

\newpage
\begin{appendix}
\global\long\def\thefigure{A.\arabic{figure}}%
\setcounter{figure}{0} 
\global\long\def\thetable{A.\arabic{table}}%
\setcounter{table}{0} 
\global\long\def\theequation{A.\arabic{equation}}%
\setcounter{equation}{0} 

\section{Solution with one tipping}\label{appA}
\subsection{Proof of Proposition \ref{prop1}}\label{appA1}
We derive the optimal strategies for regions within coalition $S$ by maximizing their collective net emissions-associated benefit: 
\begin{equation}
\max_{\{q_{i}\}_{i\in S}}\ \sum_{i\in S}\int_{0}^{+\infty}e^{-rt}\left(\left(1+\tau\right)\left(\alpha_{i}q_{i}-\frac{1}{2}\beta_{i}{q_{i}}^{2}+\epsilon_{i}\right)-\frac{1}{2}\rho_{i}T^{2}-\eta_{i}T-L_{i}\right)dt
\end{equation}

Where temperature evolution follows: 
\begin{equation}
\frac{dT\left(t\right)}{dt}=\lambda\left(\sum_{i\in S}q_{i}+\sum_{j\notin S}q_{j}\right)
\end{equation}

With derived Hamilton-Jacobi-Bellman (HJB) equation through standard dynamic programming method, the value function satisfies: 
\begin{equation}
rV^{1}\left(T\right) = \max_{\{q_{i}\}_{i\in S}}\left[
\begin{split}
&\sum_{i\in S}\left(\left(1+\tau\right)\left(\alpha_{i}q_i-\frac{1}{2}\beta_{i}{q_i}^{2}+\epsilon_{i}\right)-\frac{1}{2}\rho_{i}T^{2}-\eta_{i}T-L_{i}\right)  \\
 & \ \ \ \ \ \ \ \ \ +\lambda\frac{dV^1\left(T\right)}{dT}\left(\sum_{i\in S}q_i+\sum_{j\notin S}q_j\right)
\end{split}
\right]
\end{equation}

\noindent where $\frac{dV^{1}\left(T\right)}{dT}$ is the first-order derivative of the value function $V^{1}(T)$ with respect to the state variable $T$. 

The first-order conditions yield optimal emissions for coalition members: 
\begin{equation}
q_{i}^{1*}\left(T\right)=\frac{\left(1+\tau\right)\alpha_{i}+\lambda \frac{dV^{1}\left(T\right)}{dT}}{\left(1+\tau\right)\beta_{i}}
\end{equation}

The derivation of the Markov strategy of region $j$ outside the coalition is similar.

By substituting the optimal strategies for member and non-member regions into the Hamilton-Jacobi-Bellman (HJB) equations, we obtain:  
\begin{equation}
\begin{split}
rV^{1}\left(T\right) = & \sum_{i\in S}\left[\frac{\left(1+\tau\right)^{2}{\alpha_{i}}^{2}-\lambda^{2}\left(\frac{dV^{1}\left(T\right)}{dT}\right)^{2}}{2\left(1+\tau\right)\beta_{i}}+\left(1+\tau\right)\epsilon_{i}-\frac{1}{2}\rho_{i}T^{2}-\eta_{i}T-L_{i}\right]  \\
& +\left(\lambda A+\lambda^{2}\frac{d{U^1}\left(T\right)}{dT}\right)\frac{dV^{1}\left(T\right)}{dT}
 \end{split}
\end{equation}

for coalition members, and for each non-member region $j\notin S$:
\begin{equation}
\begin{split}
rW_{j}^{1}\left(T\right) = & \frac{{\alpha_{j}}^{2}-\lambda^{2}\left(\frac{dW_{j}^{1}\left(T\right)}{dT}\right)^{2}}{2\beta_{j}}+\epsilon_{j}-\frac{1}{2}\rho_{j}T^{2}-\eta_{j}T-L_{j}  \\
  & +\left(\lambda A+\lambda^{2}\frac{d{U^1}\left(T\right)}{dT}\right)\frac{dW_{j}^{1}\left(T\right)}{dT}
 \end{split}
\end{equation}
\noindent where $A\equiv\sum_{i \in S}\frac{\alpha_{i}}{\beta_{i}}+\sum_{j \notin S}\frac{\alpha_{j}}{\beta_{j}}$, $U^1\left(T\right)=\sum_{i\in S}\frac{V^{1}\left(T\right)}{\left(1+\tau\right)\beta_{i}}+\sum_{j\notin S}\frac{W^{1}_{j}\left(T\right)}{\beta_{j}}$, $\frac{d{U^1}\left(T\right)}{dT}$ is the first derivative of $U^1\left(T\right)$ with respect to $T$.

Given the linear-quadratic structure of the game, we postulate quadratic forms for the value functions: 
\begin{equation}
\begin{cases}
V^{1}\left(T\right)=v_{0}+v_{1}T+\frac{1}{2}v_{2}T^{2}\\
W^{1}_{j}\left(T\right)=w_{j,0}+w_{j,1}T+\frac{1}{2}w_{j,2}T^{2}
\end{cases}
\end{equation}
\noindent where $v_{0}$, $v_{1}$, $v_{2}$, $w_{j,0}$, $w_{j,1}$, and $w_{j,2}$ are undetermined coefficients\footnote{To simplify our discussion, these undetermined coefficients will be used repeatedly in the following individual sections}.

Substituting these into the HJB equations yields three coupled systems for the coefficients: 

For coalition members:
\begin{equation}
\begin{cases}
\frac{1}{2}rv_{2}=\sum_{i\in S}\left(\frac{\lambda^{2}{v_{2}}^{2}}{2\left(1+\tau\right)\beta_{i}}-\frac{1}{2}\rho_{i}\right)+v_{2}\lambda^{2}\sum_{j\notin S}\frac{w_{j,2}}{\beta_{j}}\\
rv_{1}=\sum_{i\in S}\left(\frac{\lambda^{2}v_{1}v_{2}}{\left(1+\tau\right)\beta_{i}}-\eta_{i}\right)+\lambda Av_{2}+\lambda^{2}\sum_{j\notin S}\frac{v_{1}w_{j,2}+v_{2}w_{j,1}}{\beta_{j}}\\
rv_{0}=\sum_{i\in S}\left(\frac{\left(1+\tau\right)^{2}{\alpha_{i}}^{2}+\lambda^{2}{v_{1}}^{2}}{2\left(1+\tau\right)\beta_{i}}+\left(1+\tau\right)\epsilon_{i}-L_{i}\right)+v_{1}\left(\lambda A+\lambda^{2}\sum_{j\notin S}\frac{w_{j,1}}{\beta_{j}}\right)
\end{cases}\label{eq:linear_quad_coalition}
\end{equation}

For non-members ($\forall j\notin S$):
\begin{equation}
\begin{cases}
\frac{1}{2}rw_{j,2}=-\frac{\lambda^{2}{w_{j,2}}^{2}}{2\beta_{j}}-\frac{1}{2}\rho_{j}+w_{j,2}\lambda^{2}\left(\sum_{i\in S}\frac{v_{2}}{\left(1+\tau\right)\beta_{i}}+\sum_{j\notin S}\frac{w_{j,2}}{\beta_{j}}\right)\\
\begin{split}
rw_{j,1}=& -\frac{\lambda^{2}w_{j,1}w_{j,2}}{\beta_{j}}-\eta_{j}+\lambda Aw_{j,2}\\
&+\lambda^{2}\left(\sum_{i\in S}\frac{w_{j,1}v_{2}+w_{j,2}v_{1}}{\left(1+\tau\right)\beta_{i}}+w_{j,1}\sum_{j\notin S}\frac{w_{j,2}}{\beta_{j}}+w_{j,2}\sum_{j\notin S}\frac{w_{j,1}}{\beta_{j}}\right)
\end{split}\\
rw_{j,0}=\frac{{\alpha_{j}}^{2}-\lambda^{2}{w_{j,1}}^{2}}{2\beta_{j}}+\epsilon_{j}-L_{j}+w_{j,1}\left(\lambda A+\lambda^{2}\left(\sum_{i\in S}\frac{v_{1}}{\left(1+\tau\right)\beta_{i}}+\sum_{j\notin S}\frac{w_{j,1}}{\beta_{j}}\right)\right)
\end{cases}\label{eq:linear_quad_noncoalition}
\end{equation}

The solved coefficients are presented in Table \ref{A1}.
\begin{table}[H]
\caption{Values of undetermined coefficients}
\begin{tabular}{ll}
\hline
{\small{}Member regions} & {\small{}Non-member regions}\\
\hline 
{\footnotesize{}$v_{2}=\frac{\mathrel{\lambda^{2}u_{2}-\frac{1}{2}r+\sqrt{\left(\lambda^{2}u_{2}-\frac{1}{2}r\right)^{2}-\left(\sum_{i\in S}\frac{\lambda^{2}}{\left(1+\tau\right)\beta_{i}}\right)\left(\sum_{i\in S}\rho_{i}\right)}}}{\sum_{i\in S}\frac{\lambda^{2}}{\left(1+\tau\right)\beta_{i}}}$} & {\footnotesize{}$w_{j,2}=\frac{\mathrel{\lambda^{2}u_{2}-\frac{1}{2}r+\sqrt{\left(\lambda^{2}u_{2}-\frac{1}{2}r\right)^{2}-\frac{\lambda^{2}}{\beta_{j}}\rho_{j}}}}{\frac{\lambda^{2}}{\beta_{j}}}$}\\
\hline 
{\footnotesize{}$v_{1}=\frac{\left(\lambda^{2}u_{1}+\lambda A\right)v_{2}-\sum_{i\in S}\eta_{i}}{\left(\sum_{i\in S}\frac{\lambda^{2}v_{2}}{\left(1+\tau\right)\beta_{i}}\right)-\left(\lambda^{2}u_{2}-r\right)}$} & {\footnotesize{}$w_{j,1}=\frac{\left(\lambda^{2}u_{1}+\lambda A\right)w_{j,2}-\eta_{j}}{\frac{\lambda^{2}w_{j,2}}{\beta_{j}}-\left(\lambda^{2}u_{2}-r\right)}$}\\
\hline 
{\footnotesize{}$v_{0}=\frac{\sum_{i\in S}\left(\frac{\left(1+\tau\right)^{2}{\alpha_{i}}^{2}-\lambda^{2}{v_{1}}^{2}}{2\left(1+\tau\right)\beta_{i}}+\left(1+\tau\right)\epsilon_{i}-L_{i}\right)+\left(\lambda^{2}u_{1}+\lambda A\right)v_{1}}{r}$} & {\footnotesize{}$w_{j,0}=\frac{\frac{{\alpha_{j}}^{2}-\lambda^{2}{w_{j,1}}^{2}}{2\beta_{j}}+\epsilon_{j}-L_{j}+\left(\lambda^{2}u_{1}+\lambda A\right)w_{j,1}}{r}$}\\
\hline
\end{tabular}{\footnotesize\par}
\label{A1}
\end{table}

\noindent where 
$u_{1}=\frac{\left(\sum_{i\in S}\frac{1}{\left(1+\tau\right)\beta_{i}}\right)\left[\frac{\lambda Av_{2}-\sum_{i\in S}\eta_{i}}{\left(\sum_{i\in S}\frac{\lambda^{2}v_{2}}{\left(1+\tau\right)\beta_{i}}\right)-\left(\lambda^{2}u_{2}-r\right)}\right]+\left\{ \sum_{j\notin S}\frac{\lambda Aw_{j,2}-\eta_{j}}{\beta_{j}\left[\frac{\lambda^{2}w_{j,2}}{\beta_{j}}-\left(\lambda^{2}u_{2}-r\right)\right]}\right\} }{1-\left[\frac{\sum_{i\in S}\frac{\lambda^{2}v_{2}}{\left(1+\tau\right)\beta_{i}}}{\left(\sum_{i\in S}\frac{\lambda^{2}v_{2}}{\left(1+\tau\right)\beta_{i}}\right)-\left(\lambda^{2}u_{2}-r\right)}+\sum_{j\notin S}\frac{\frac{\lambda^{2}w_{j,2}}{\beta_{j}}}{\frac{\lambda^{2}w_{j,2}}{\beta_{j}}-\left(\lambda^{2}u_{2}-r\right)}\right]}$,

and $u_{2}$ is the solution to the following equation: 
\begin{equation*}
\footnotesize{}
0=\left (n-m\right )\lambda^{2}u_{2}-\left[
\begin{split}
&\frac{1}{2}r\left(n-m+1\right)-\mathrel{\sqrt{\left(\lambda^{2}u_{2}-\frac{1}{2}r\right)^{2}-\left(\sum_{i\in S}\frac{\lambda^{2}}{\left(1+\tau\right)\beta_{i}}\right)\left(\sum_{i\in S}\rho_{i}\right)}}\\ 
&-\sum_{j\notin S}\mathrel{\sqrt{\left(\lambda^{2}u_{2}-\frac{1}{2}r\right)^{2}-\frac{\lambda^{2}}{\beta_{j}}\rho_{j}}}\\ 
\end{split}
\right]
\end{equation*}

For a coalition $S$ with $m=\left|S\right|$ member regions, by substituting the solved coefficients into the optimal emission strategies for both member and non-member regions, we obtain the complete equilibrium solutions, including the explicit form of the value functions. 

\subsection{Proof of existence and uniqueness of analytical solutions}\label{appA2}
We establish the existence and uniqueness of solutions for the quadratic coefficients $v_{2}$ and $w_{j,2}$ through the following proof:

\textbf{Special Case Analysis (Single Non-Member Region)}: 

For the simplified case with one non-member region $j$, the system reduces to six equations. The quadratic coefficients satisfy:
\begin{equation}
\begin{aligned}
F\left(v_{2}\right)=&-\frac{3\lambda^{2}\beta_{j}}{8\left(1+\tau\right)^{2}{\beta_{I}}^{2}}{v_{2}}^{4}+\frac{r\beta_{j}}{2\left(1+\tau\right)\beta_{I}}{v_{2}}^{3}\\
&+\left[-\frac{r^{2}\beta_{j}}{8\lambda^{2}}+\frac{\beta_{j}\rho_{I}}{4\left(1+\tau\right)\beta_{I}}-\frac{\rho_{j}}{2}\right]{v_{2}}^{2}+\frac{\beta_{j}{\rho_{I}}^{2}}{8\lambda^{2}}
\end{aligned}
\end{equation}
\begin{equation}
\begin{aligned}
G\left(w_{j,2}\right)=&-\frac{3\lambda^{2}\left(1+\tau\right)\beta_{I}}{8{\beta_{j}}^{2}}{w_{j,2}}^{4}+\frac{r\left(1+\tau\right)\beta_{I}}{2\beta_{j}}{w_{j,2}}^{3}\\&+\left[-\frac{r^{2}\left(1+\tau\right)\beta_{I}}{8\lambda^{2}}+\frac{\left(1+\tau\right)\beta_{I}\rho_{j}}{4\beta_{j}}-\frac{\rho_{I}}{2}\right]{w_{j,2}}^{2}+\frac{\left(1+\tau\right)\beta_{I}{\rho_{j}}^{2}}{8\lambda^{2}}
\end{aligned}
\end{equation}

\noindent where $\frac{1}{\beta_{I}}=\sum_{i\in S}\frac{1}{\beta_{i}}$, $\rho_{I}=\sum_{i\in S}\rho_{i}$.

\textit{Existence of Negative Roots}: For $F\left(v_{2}\right)$, $F\left(0\right)=\frac{\beta_{j}{\rho_{I}}^{2}}{8\lambda^{2}}>0$, $\lim_{v_{2}\rightarrow -\infty}F\left(v_{2}\right)=-\infty$, and by continuity, $\exists v_{2}^*<0$ such that $F\left(v_{2}^*\right)=0$. 

\textit{Uniqueness Proof}: The derivative $F_{v_{2}}\left(v_{2}\right)$ is: 
\begin{equation}
F_{v_{2}}\left(v_{2}\right)=-\frac{3\lambda^{2}\beta_{j}}{2\left(1+\tau\right)^{2}{\beta_{I}}^{2}}{v_{2}}^{3}+\frac{3r\beta_{j}}{2\left(1+\tau\right)\beta_{I}}{v_{2}}^{2}+\left[-\frac{r^{2}\beta_{j}}{4\lambda^{2}}+\frac{\beta_{j}\rho_{I}}{2\left(1+\tau\right)\beta_{I}}-\rho_{j}\right]v_{2}
\end{equation}

It has at most three real roots (cubic equation), and always non-negative for $v_{2}\le 0$ (analysis of coefficients). Therefore $F\left( v_{2} \right)$ is strictly increasing on $\left(-\infty, 0 \right]$. Combined with $F\left(0\right) > 0$, guarantees unique negative root. 

The uniqueness proof of $w_{j,2}$ is similar. 

\textbf{General Case (Multiple Non-Members)}: 

For systems with multiple non-members, analytical proof becomes intractable. The numerical simulations confirm the unique real solution exists which corresponds to positive discriminant case. Figures \ref{Fa1} and \ref{Fa2} demonstrate the unique solution exists for $\lambda^2u_{2}$, which is consistent across coalition sizes ($m = 1,...,n$), and robust to technology-sharing parameter ($\bar \tau =1\%$). This completes the proof of existence and uniqueness for the quadratic coefficients in our solution.
\begin{figure}[H]
\includegraphics[width=\linewidth, keepaspectratio]{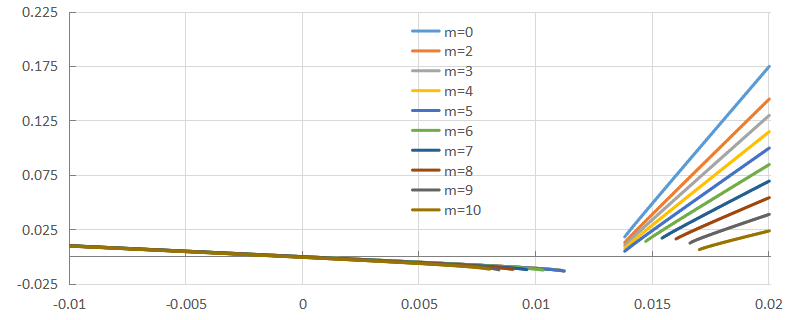}
\caption{\textbf{Simulation results for $\lambda^2u_{2}$ when the plus sign holds} \\Note: $\bar\tau = 1\%$, and $m$ is coalition size.}
\label{Fa1}
\end{figure}
\begin{figure}[H]
\includegraphics[width=\linewidth, keepaspectratio]{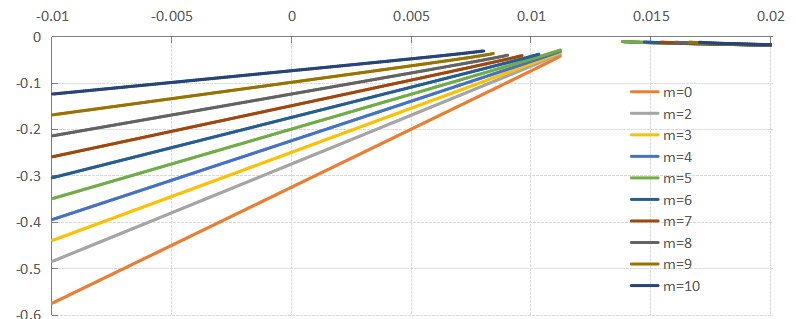}
\caption{\textbf{Simulation results for $\lambda^2u_{2}$ when the minus sign holds} \\Note: $\bar\tau = 1\%$, and $m$ is coalition size.}
\label{Fa2}
\end{figure}

\subsection{Proof of Proposition \ref{Prop2}}\label{appA3}
Consider member regions maximizing their expected net emissions-associated benefit collectively: 
\begin{equation}
\max_{\{q_{i}\}_{i \in S}} E\sum_{i\in S}\left[
\begin{split}
& \int_{0}^{t_{1}}e^{-rt}\left(\left(1+\tau\right)\left(\alpha_{i}q_{i}-\frac{1}{2}\beta_{i}{q_{i}}^{2}+\epsilon_{i}\right)-\frac{1}{2}\rho_{i}T^{2}-\eta_{i}T\right)dt  \\
& +\int_{t_{1}}^{+\infty}e^{-rt}\left(
\begin{split}
&\left(1+\tau\right)\left(\alpha_{i}q_{i}-\frac{1}{2}\beta_{i}{q_{i}}^{2}+\epsilon_{i}\right)\\
& -\frac{1}{2}\rho_{i}T^{2}-\eta_{i}T-L_{i}\\
\end{split}
\right)dt\\
\end{split}
\right]   
\end{equation}

The state variable evolves as $\frac{dT}{dt} = \lambda\left(\sum_{i\in S}q_i + \sum_{j\notin S}q_j\right)$, $T(0) = T_0$. And with hazard rate $H(T) = \chi \max(0, T-\underline{T})$, The value function ${V^{2}}\left(T\right)$ satisfies: 
\begin{equation}
rV^{2}\left(T\right)= \max_{\{q_i\}_{i \in S}}\left[
\begin{split}
&\sum_{i\in S}\left(\left(1+\tau\right)\left(\alpha_{i}q_{i}-\frac{1}{2}\beta_{i}{q_{i}}^{2}+\epsilon_{i}\right)-\frac{1}{2}\rho_{i}T^{2}-\eta_{i}T\right)  \\
 & +\lambda\frac{dV^{2}\left(T\right)}{dT}\left(\sum_{i\in S}q_{i}+\sum_{j\notin S}q_{j}\right)  \\
 &+H\left(T\right)\left(V^{1}\left(T\right)-V^{2}\left(T\right)\right)
 \end{split}
 \right]
\end{equation}

The first-order condition yields:
\begin{equation}
q_{i}^{2*}\left(T\right)=\frac{\left(1+\tau\right)\alpha_{i}+\lambda \frac{dV^{2}\left(T\right)}{dT}}{\left(1+\tau\right)\beta_{i}}
\end{equation}

The derivation of the Markov strategy of region $j$ outside the coalition is similar.

From the definition of $H(T)$, when $T<\underline{T}$, there is $H(T)=0$. The HJB equations in state 1 will be identical to HJB equations in state 2 in form. Thus, the value functions $V^2$ and $W^2_j$ can be conform to a quadratic form. 

For temperature levels $T\ge \underline{T}$, substituting the optimal emission strategies $q_{i}^{2*}$ and $q_{j}^{2*}$ into the HJB equations, we can obtain:
\begin{equation}
\begin{aligned}
    (r+\chi T)V^{2}\left(T\right) = 
&\sum_{i\in S} \left( \frac{(1+\tau)\alpha_i^2}{2\beta_i}+(1+\tau)\epsilon_i-\frac{1}{2}\rho_iT^2-\eta T\right)  \\
& +\frac{\lambda^2}{2(1+\tau)\beta_{I}}(\frac{d V^2(T)}{d T})^2+\lambda A\frac{dV^2(T)}{dT}\\
 & +\frac{\lambda^2}{\beta_{J}}\frac{dV^2(T)}{dT}\frac{dW_j^2(T)}{dT}+\chi TV^{1}(T)
\end{aligned}
\end{equation}
\begin{equation}
\begin{aligned}
    (r+\chi T)W^{2}_j\left(T\right) = 
& \frac{\alpha_i^2}{2\beta_i}-\frac{\lambda^2}{2\beta_j}(\frac{dW_j^2(T)}{dT})^2+\epsilon_j-\frac{1}{2}\rho_jT^2-\eta_jT \\
& +\frac{\lambda^2}{\beta_{J}}(\frac{dW_j^2(T)}{dT})^2+\lambda A\frac{dW_j^2(T)}{dT}_T \\
 & +\frac{\lambda^2}{(1+\tau)\beta_{I}}\frac{dV^2(T)}{dT}\frac{dW_j^2(T)}{dT}+\chi TW_j^{1}(T)
\end{aligned}
\end{equation}
where $\frac{1}{\beta_{J}} = \sum_{j\notin S}\frac{1}{\beta_j}$.

It is easy to verify that when the value functions $V^2$ and $W^2_j$ are quadratic polynomials or cubic polynomials, the highest terms on both sides of these two equations are the same. This means that the value functions $V^2$ and $W^2_j$ can be quadratic polynomials or cubic polynomials. 

\subsection{Numerical solution with Chebyshev polynomial approximation}\label{appA4}
We employ Chebyshev polynomial approximation on $\left[\underline{T},\overline{T}\right]$:

\begin{equation}
\hat{V}\left(T\right)=\sum_{k=0}^{K}a_{k}\mathcal{T_{\mathit{k}}}\left(z\left(T\right)\right)
\end{equation}

\noindent where $z\left(T\right )=\frac {2T-\underline{T}-\overline{T}}{\overline{T}-\underline{T}}$ maps to $\left[-1,1\right]$, $a_{k}$ is the Chebyshev coefficients, $\mathcal{T_{\mathit{k}}}\left(z\left(T\right)\right)$ are Chebyshev polynomials adapted to $\left[\underline{T},\overline{T}\right]$ for $k=0,1,2,...$ with degree-$K$. 

We choose $K+1$ Chebyshev nodes, $T_{d}=\left(z_{d}+1\right)\left(\overline{T}-\underline{T}\right)/2+\underline{T}$, with $z_{d}=-\cos\left(\frac{\left(2d-1\right)\pi}{\left(2\left(K+1 \right)\right)}\right)$ for $d=1,...,K+1$. The Lagrange values $\left(T_{d},v_{d}\right)$: $d=1,...,K+1$ are given (where $v_{d}=V(T_{d})$), and the coefficients $a_{k}$ can be computed: 
\begin{equation}
\begin{cases}
a_{0}=\frac{1}{K+1}\sum_{d=1}^{K+1}v_{d}  \\
a_{k}=\frac{2}{K+1}\sum_{d=1}^{K+1}v_{d}\mathcal{T_{\mathit{k}}}\left(z\left(T_{d}\right)\right),\ \ \ k=1,...,K
\end{cases}
\end{equation}

See \citet{judd1998numerical} and \citet{cai2013shape} for more details. We then assume:  
\begin{equation}
\begin{cases}
V^{2}\left(T\right)=\sum_{k=0}^{K}a_{k}\phi_{k}\left(T \right)\\
W^{2}_{j}\left(T\right)=\sum_{k=0}^{K}b_{k,j}\phi_{k}\left(T \right)
\end{cases}
\end{equation}
for $T\in[\underline{T},\overline{T}]$, where $\phi_{k}(T)$ are Chebyshev basis functions with coefficients $a_k$ and $b_{k,j}$ for $V^{2}\left(T\right)$ and $W^{2}_{j}\left(T\right)$, respectively. We solve 
\begin{equation}
\begin{split}
0=&-\left[r+H\left(T_{d}\right)\right]\sum_{k=0}^{K}a_{k}\phi_{k}(T_{d})  \\
&+\sum_{i\in S}\left[\frac{\left(1+\tau\right)^{2}{\alpha_{i}}^{2}-\lambda^{2}\left(\sum_{k=0}^{K}a_{k}{\phi_{k}}_{T_{d}}(T_{d})\right)^{2}}{2\left(1+\tau\right)\beta_{i}}+\left(1+\tau\right)\epsilon_{i}-\frac{1}{2}\rho_{i}{T_{d}}^{2}-\eta_{i}T_{d}\right]  \\
&+\left(\lambda A+\lambda^{2}\frac{d{U^2}\left(T_{d}\right)}{dT_{d}}\right)\sum_{k=0}^{K}a_{k}{\phi_{k}}_{T_{d}}(T_{d})\\
&+H\left(T_{d}\right)\left(v_{0}+v_{1}T_{d}+\frac{1}{2}v_{2}{T_{d}}^{2}\right)\\
\end{split}
\end{equation}
\begin{equation}
\begin{split}
0=&-\left[r+H\left(T_{d}\right)\right]\sum_{k=0}^{K}b_{k,j}\phi_{k}(T_{d})  \\
&+\frac{{\alpha_{j}}^{2}-\lambda^{2}\left(\sum_{k=0}^{K}b_{k,j}{\phi_{k}}_{T_{d}}(T_{d})\right)^{2}}{2\beta_{j}}+\epsilon_{j}-\frac{1}{2}\rho_{j}{T_{d}}^{2}-\eta_{j}T_{d}  \\
&+\left(\lambda A+\lambda^{2}\frac{d{U^2}\left(T_{d}\right)}{dT_{d}}\right)\sum_{k=0}^{K}b_{k,j}{\phi_{k}}_{T_{d}}(T_{d})+H\left(T_{d}\right)\left(w_{j,0}+w_{j,1}T_{d}+\frac{1}{2}w_{j,2}{T_{d}}^{2}\right)\\
\end{split}
\end{equation}

For each non-member region ($j\notin S$) and Chebyshev node ($d=1,...,K+1$). We use the analytical solutions $V^{2}\left(\underline{T}\right)$ and $W^{2}_{j}\left(\underline{T}\right)$ at $T=\underline{T}$ as boundary conditions, where ${\phi_{k}}_{T_{d}}$ denotes the first derivative of $\phi_{k}(T_{d})$ at $T_d$. The resulting system comprises $(m+1)(1+n-m)$ unknown coefficients ($a_{k}$ and $b_{k,j}$) and $\left(K+1\right)\left(1+n-m\right)$ equations, which we solve numerically using MATLAB's fsolve. Regarding the selection of nodes for the solution, we compared both $K=4$ and $K=6$ options and found virtually no difference in computational efficiency between the two. Therefore, we opted for $K=4$. For details, please refer to the Supplement S1.

\section{Comparison between technology-sharing and sanction}\label{appC}
\subsection{Technology-sharing}\label{appC1}
For homogeneous regions with technology-sharing rate $\tau$, we analyze the benefit differential between full cooperation and non-cooperation after the tipping event:

\textit{Member regions}: $\frac{v_0+v_1T+\frac{1}{2}v_2T^2}{n}$;

\textit{Non member regions}: $w_0+w_1T+\frac{1}{2}w_2T^2$. 

let $B\equiv\frac{n\alpha}{\beta}$, we have: 
\begin{table}[H]
\caption{Values of undetermined coefficients}
\begin{tabular}{ll}
\hline
{{\small{}Member regions}} & {{\small{}Non-member regions}}\\
\hline 
{\footnotesize{}$v_{2}=\frac{\mathrel{\lambda^{2}u_{2}-\frac{1}{2}r+\sqrt{\left(\lambda^{2}u_{2}-\frac{1}{2}r\right)^{2}-\left(\frac{n\lambda^{2}}{\left(1+\tau\right)\beta}\right)\left(n\rho\right)}}}{\frac{n\lambda^{2}}{\left(1+\tau\right)\beta}}$} & {\footnotesize{}$w_{2}=\frac{\mathrel{\lambda^{2}u_{2}-\frac{1}{2}r+\sqrt{\left(\lambda^{2}u_{2}-\frac{1}{2}r\right)^{2}-\frac{\lambda^{2}}{\beta}\rho}}}{\frac{\lambda^{2}}{\beta}}$}\\
\hline 
{\footnotesize{}$v_{1}=\frac{\left(\lambda^{2}u_{1}+\lambda B\right)v_{2}-n\eta}{\left(\frac{n\lambda^{2}v_{2}}{\left(1+\tau\right)\beta}\right)-\left(\lambda^{2}u_{2}-r\right)}$} & {\footnotesize{}$w_{1}=\frac{\left(\lambda^{2}u_{1}+\lambda B\right)w_{2}-\eta}{\frac{\lambda^{2}w_{2}}{\beta}-\left(\lambda^{2}u_{2}-r\right)}$}\\
\hline 
{\footnotesize{}$v_{0}=\frac{n\left(\frac{\left(1+\tau\right)^{2}\alpha^{2}-\lambda^{2}{v_{1}}^{2}}{2\left(1+\tau\right)\beta}+\left(1+\tau\right)\epsilon-L\right)+\left(\lambda^{2}u_{1}+\lambda B\right)v_{1}}{r}$} & {\footnotesize{}$w_{0}=\frac{\frac{\alpha^{2}-\lambda^{2}w_{1}^{2}}{2\beta}+\epsilon-L+\left(\lambda^{2}u_{1}+\lambda B\right)w_{1}}{r}$}\\
\hline 
\end{tabular}
\end{table}

\noindent where
\begin{equation*}
u_{1}=\frac{\left(\frac{n}{\left(1+\tau\right)\beta}\right)\left[\frac{\lambda Bv_{2}-n\eta}{\left(\frac{n\lambda^{2}v_{2}}{\left(1+\tau\right)\beta}\right)-\left(\lambda^{2}u_{2}-r\right)}\right]}{1-\left[\frac{\frac{n\lambda^{2}v_{2}}{\left(1+\tau\right)\beta}}{\left(\frac{n\lambda^{2}v_{2}}{\left(1+\tau\right)\beta}\right)-\left(\lambda^{2}u_{2}-r\right)}\right]},
\end{equation*}
and $u_2$ is determined by the following equation:
\begin{equation*}
\small
\mathrel{\sqrt{\left(\lambda^{2}u_{2}-\frac{1}{2}r\right)^{2}-\left(\frac{n\lambda^{2}}{\left(1+\tau\right)\beta}\right)\left(n\rho\right)}}=\frac{1}{2}r.
\end{equation*}

Considering asymptotic behavior ($n\rightarrow \infty$), it is easy to deduce that $\lambda^2u_2-\frac{1}{2}r=-n\lambda\sqrt{\frac{\rho}{(1+\tau)\beta}}$, $u_2=-\frac{n}{\lambda}\sqrt{\frac{\rho}{(1+\tau)\beta}}$ and $u_1 = -\frac{\frac{\alpha}{\beta}\sqrt{(1+\tau)\beta\rho}+\eta}{\lambda\sqrt{(1+\tau)\beta\rho}}  n$, then we have: 

\textit{Member value}: $\frac{\frac{(1+\tau)\alpha^2}{2\beta}-L}{r}$;

\textit{Non member value}: $\frac{\frac{\alpha^2}{2\beta}+\frac{\eta}{\rho}-L}{r}$;

And \textit{participation incentive}: $\frac{\frac{\tau\alpha}{2\beta}-\frac{\eta^2}{\rho}}{r}$. Participation becomes dominant when $\frac{\tau\alpha}{2\beta}>\frac{\eta^2}{\rho}$

\subsection{Sanction}\label{appC2}
For sanction rate $\tau_2$ applied to non-members, member regions (with size $m$) and non-member regions ($n-m$) make emissions decisions after the tipping has occurred, and solve the following collective net emissions-associated benefit functions, respectively:

Member regions: 
\begin{equation}
\max\ m\int_{0}^{\infty}e^{-rt}\left(\alpha q^{s}(t)-\frac{1}{2}\beta\left(q^{s}(t)\right)^{2}+\epsilon-\frac{1}{2}\rho T(t)^{2}-\eta T(t)-L\right)dt
\end{equation}

Non member regions: 
\begin{equation}
\max\int_{0}^{\infty}e^{-rt}\left(
\begin{split}
&(1-\tau_2)\left(\alpha q^{f}(t)-\frac{1}{2}\beta\left(q^{f}(t)\right)^{2}+\epsilon\right)\\
&-\frac{1}{2}\rho T(t)^{2}-\eta T(t)-L
\end{split}
\right)dt
\end{equation}

The difference in benefit with sanctions are: 

\textit{Member regions}: $\frac{v_3+v_4T+\frac{1}{2}v_5T^2}{m}$;

\textit{Non-member regions}: $w_{3}+w_{4}T+\frac{1}{2}w_{5}T^{2}$

Let $B\equiv\frac{n\alpha}{\beta}$, we have 
\begin{table}[H]
\caption{Values of undetermined coefficients}
\small
\noindent \centering{}%
{\footnotesize{}}%
\begin{tabular}{ll}
\hline
{{\small{}Member regions}} & {{\small{}Non-member regions}}\\
\hline 
{\footnotesize{}$v_{5}=\frac{\mathrel{\lambda^{2}u_{4}-\frac{1}{2}r+\sqrt{\left(\lambda^{2}u_{4}-\frac{1}{2}r\right)^{2}-\left(\frac{m\lambda^{2}}{\beta}\right)\left(m\rho\right)}}}{\frac{m\lambda^{2}}{\beta}}$} & {\footnotesize{}$w_{5}=\frac{\mathrel{\lambda^{2}u_{4}-\frac{1}{2}r+\sqrt{\left(\lambda^{2}u_{4}-\frac{1}{2}r\right)^{2}-\frac{\lambda^{2}}{(1-\tau_2)\beta}\rho}}}{\frac{\lambda^{2}}{(1-\tau_2)\beta}}$}\\
\hline 
{\footnotesize{}$v_{4}=\frac{\left(\lambda^{2}u_{3}+\lambda B\right)v_{5}-m\eta}{\left(\frac{m\lambda^{2}v_{5}}{\beta}\right)-\left(\lambda^{2}u_{4}-r\right)}$} & {\footnotesize{}$w_{4}=\frac{\left(\lambda^{2}u_{3}+\lambda B\right)w_{5}-\eta}{\frac{\lambda^{2}w_{5}}{(1-\tau_2)\beta}-\left(\lambda^{2}u_{4}-r\right)}$}\\
\hline 
{\footnotesize{}$v_{3}=\frac{m\left(\frac{\alpha^{2}-\lambda^{2}v_{4}^{2}}{2\beta}+\epsilon-L\right)+\left(\lambda^{2}u_{3}+\lambda B\right)v_{4}}{r}$} & {\footnotesize{}$w_{3}=\frac{\frac{(1-\tau_2)^2\alpha^{2}-\lambda^{2}w_{4}^{2}}{2(1-\tau_2)\beta}+(1-\tau_2)\epsilon-L+\left(\lambda^{2}u_{3}+\lambda B\right)w_{4}}{r}$}\\
\hline 
\end{tabular}
\end{table}

where
\begin{equation*}
u_{3}=\frac{\left(\frac{m}{\beta}\right)\left[\frac{\lambda Bv_{5}-m\eta}{\left(\frac{m\lambda^{2}v_{5}}{\beta}\right)-\left(\lambda^{2}u_{4}-r\right)}\right]}{1-\left[\frac{\frac{m\lambda^{2}v_{5}}{\beta}}{\left(\frac{m\lambda^{2}v_{5}}{\beta}\right)-\left(\lambda^{2}u_{4}-r\right)}\right]},
\end{equation*}
and $u_4$ is determined by the following equation:
\begin{equation*}
\mathrel{\sqrt{\left(\lambda^{2}u_{4}-\frac{1}{2}r\right)^{2}-\left(\frac{m\lambda^{2}}{\beta}\right)\left(m\rho\right)}}=\frac{1}{2}r.
\end{equation*}

Considering asymptotic behavior ($m\rightarrow n\rightarrow\infty$), it is easy to deduce that $\lambda^2u_4-\frac{1}{2}r=-m\lambda\sqrt{\frac{\rho}{\beta}}$, $u_4=-\frac{m}{\lambda}\sqrt{\frac{\rho}{\beta}}$ and $u_3 = -\frac{\frac{\alpha}{\beta}\sqrt{\beta\rho}+\eta}{\lambda\sqrt{\beta\rho}} m$. then we have: 

\textit{Member value}: $\frac{\frac{\alpha^2}{2\beta}-L}{r}$;

\textit{Non member value}: $\frac{\frac{(1-\tau_2)\alpha^2}{2\beta}+\frac{\eta}{\rho}-L}{r}$;

And \textit{participation incentive}: $\frac{\frac{\tau_2\alpha}{2\beta}-\frac{\eta^2}{\rho}}{r}$. 

At identical implementation levels ($\tau=\tau_2$), $\frac{\tau\alpha}{2\beta}-\frac{\eta^2}{\rho}=\frac{\tau_2\alpha}{2\beta}-\frac{\eta^2}{\rho}$. 

\section{Coalition shrink and expansion}\label{appD}
\subsection{Proof of coalition shrink with rising temperatures}\label{appD1}
To analyze how the coalition changes with rising temperatures, we simplify the model by assuming homogeneous regions and examine coalition dynamics from two perspectives:

i. Changes in benefits between full cooperation and non-cooperation.

ii. Changes in emissions for a region inside versus outside a coalition.

\textbf{Changes in benefits between full cooperation and non-cooperation}: 

Consider $n$ homogeneous regions after a tipping event has occurred. Let $V_h\left(T\right)$ and $W_h\left(T \right)$ denote the value functions for full cooperation and non-cooperation, respectively. The difference in benefits for a homogeneous region is:
\begin{equation}
    \psi(T)=\frac{V_h(T)}{n}-W_h(T)=\frac{v_0+v_1  T+\frac{1}{2}v_2  T^2}{n}-w_0-w_1  T-\frac{1}{2}w_2  T^2
\end{equation} 

If $\psi(T)>0$, regions are inclined to join the coalition (larger values imply stronger willingness). 

If $\psi(T)<0$, regions are less likely to join (smaller values imply stronger reluctance). 

The derivative of $\psi(T)$ with respect to temperature $T$ is: 
\begin{equation}
\frac{d\psi\left(T\right)}{dT}=\frac{v_1+v_2  T}n-w_1-w_2  T
\end{equation}

If $\frac{d\psi\left(T\right)}{dT}>0$, willingness to cooperate increases with temperature (coalition expands).

If $\frac{d\psi\left(T\right)}{dT}<0$, willingness decreases (coalition shrinks).

Taking the limit as $n\rightarrow\infty$: 
\begin{equation}
    \lim_{n\to\infty}\frac{d\psi\left(T\right)}{dT}=\lim_{n\to\infty}(\frac{v_1+v_2  T}n-w_1-w_2  T)
\end{equation}

In full cooperation case, the parameter $u_2^v$ satisfies: 
\begin{equation}
    \frac12r-\sqrt{(\lambda^2u_2^v-\frac12r)^2-\frac{n^2\lambda^2\rho}{(1+\tau)\beta}}=0
\end{equation}

Discarding the positive root (since $v_2<0$:
\begin{equation}
    u_2^v=\frac{\frac12r-\sqrt{\frac14r^2+\frac{n^2\lambda^2r}{(1+\tau)\beta}}}{\lambda^2}
\end{equation}

Thus: 
\begin{equation}
\begin{split}
    \lim_{n\to\infty}\frac{v_2}n
    &=\lim_{n\to\infty}\frac{-\sqrt{\frac14r^2+\frac{n^2\lambda^2r}{(1+\tau)\beta}}+\frac12r}{\frac{n^2\lambda^2}{(1+\tau)\beta}}\\
    &=-\lim_{n\to\infty}\frac{\sqrt{r(1+\tau)\beta}}{n\lambda}=-\frac{\sqrt{r(1+\tau)\beta}}\lambda  o(n)
 \end{split}   
\end{equation}
\noindent where $o(n)$ denotes an infinitesimal as $n\rightarrow\infty$.

For $u_1^v$:
\begin{equation}
    u_1^v=\frac{n(\lambda Bv_2-n\eta)}{(1+\tau)(r-\lambda^2u_2^v)\beta}
\end{equation}

leading to: 
\begin{equation}
    \lim_{n\to\infty}\frac{v_1}n=-\alpha\sqrt{\frac{(1+\tau)}{r\beta}} \ \ \text{(a negative constant)}
\end{equation}

In non-cooperation case, the parameter $u_2^w$ satisfies: 
\begin{equation}
    n\lambda^2u_2^w-\frac{n+1}2r-(\lambda^2u_2^w-\frac12r)+n\sqrt{(\lambda^2u_2^w-\frac12r)^2-\frac{\lambda^2\rho}\beta}=0
\end{equation}

Solving: 
\begin{equation}
    u_2^w=\frac{nr-\sqrt{n^2r^2+4(2n-1) \frac{\lambda^2n^2\rho}\beta}}{2(2n-1)\lambda^2}
\end{equation}

Taking the limit: 
\begin{equation}
    \lim_{n\to\infty}\left(\lambda^2u_2^w-\frac12r\right)=-\lambda\sqrt{\frac\rho{2\beta}}O(\sqrt{n})
\end{equation}
\noindent where $O(\sqrt{n})$ denotes growth at the rate of $\sqrt{n}$.

Thus:
\begin{equation}
    \lim_{n\to\infty}w_2=-\frac1\lambda\sqrt{\frac{\rho\beta}2}o(\sqrt{n})
\end{equation}

For $u_1^w$:
\begin{equation}
    u_1^w=\frac{n(\lambda^2Bw_2-\eta)}{\lambda^2w_2(1-n)-\beta(\lambda^2u_2^w-r)}
\end{equation}

Yielding: 
\begin{equation}
    \lim_{n\to\infty}w_{1}=(\lambda-1) \frac{\alpha\rho}{\beta\lambda}  o(\sqrt{n})
\end{equation}

Combining these results:
\begin{equation}    
\lim\limits_{n\to\infty}\frac{d\psi\left(T\right)}{dT}==-\alpha\sqrt{\frac{(1+\tau)}{r\beta}}<0
\end{equation}

Thus, as $n\rightarrow\infty$, $\frac{d\psi\left(T\right)}{dT}<0$, indicating that the benefit of joining the coalition decreases with rising temperatures. 

\textbf{Changes in emissions for a region inside versus outside a coalition}: 

Assume $\tau=0$. Let $q\left(T,m\right)$ and $q\left(T,m-1\right)$ denote emissions for a region inside and outside a coalition of size $m$, respectively. The difference is:
\begin{equation}
    q^{\psi}(T,m)=q(T,m)-q(T,m-1)=\frac{\alpha+\lambda \frac{dV(T,m)}{dT}}\beta-\frac{\alpha+\lambda \frac{dW(T,m-1)}{dT}}\beta 
\end{equation}

Taking derivative with respect to $T$: 
\begin{equation}
\frac{d{q^{\psi}}(T,m)}{dT}=\frac{\lambda \frac{d^2 V(T,m)}{d{T^2}}}{\beta}-\frac{\lambda \frac{d^2W_{T}}{d{T^2}}(T,m-1)}\beta=\frac{\lambda v_2(m)}\beta-\frac{\lambda  w_2(m-1)}\beta 
\end{equation}

If $\frac{d{q^{\psi}}(T,m)}{dT}>0$, emission reduction pressure grows faster for outsiders, incentivizing joining.

If $\frac{d{q^{\psi}}(T,m)}{dT}<0$, pressure grows faster for insiders, incentivizing leaving.

Let $\sigma _m=\lambda^2u_2(m)-\frac{1}{2}r$. Then: 
\begin{equation}
\begin{split}
{q^{\psi}}_{T}(T,M)&=\frac\lambda\beta(v_2(m)-w_2(m-1))\\
&=\sigma _m-m\sigma _{m-1}+\sqrt{{\sigma _m}^2-\frac{m^2\lambda^2\rho}\beta}-\sqrt{m{\sigma _{m-1}}^2-\frac{m^2\lambda^2\rho}\beta}
\end{split}
\end{equation}

Since $\sigma _m+\sqrt{{\sigma _m}^2-\frac{m^2\lambda^2\rho}\beta}<0$ and $-m\sigma _{m-1}-\sqrt{m{\sigma _{m-1}}^2-\frac{m^2\lambda^2\rho}\beta}<0$, we conclude $\frac{d{q^{\psi}}(T,m)}{dT}<0$. Thus, as temperatures rise, the reduction pressure on coalition members increases more rapidly, making regions more likely to leave. 

\subsection{Discussion of coalition expansion with the occurrence of tipping}\label{appD2}
We analyze how the occurrence of tipping events affects the incentive for regions to join a coalition. Consider a coalition of size $m$ in a homogeneous setting. Define the net coalition benefit difference before and after tipping as: 
\begin{equation}
\psi_{tp}=\underbrace{\left [V^2(T,m)-W^2(T,m-1)\right]}_{\text{Pre-tipping gain}}-\underbrace{\left [V^1(T,m)-W^1(T,m-1)\right]}_{\text{Post-tipping gain}}
\end{equation}

If $\psi_{tp}>0$, the net benefit of joining is higher before tipping. Regions are more inclined to join before tipping.

If $\psi_{tp}<0$, it indicates that the region's benefit within the coalition is smaller before tipping, making it less inclined to join the coalition before tipping.

Due to analytical complexity, we illustrate this numerically using homogeneous parameters, in which tipping losses $L^{1}=2\%$, $L^{2}=4\%$,technology-sharing rate $\tau = 5\%$. It can be observed that, for small coalitions ($m=2,3$), $\psi_{tp}>0$, regions prefer joining before tipping. For larger coalitions ($m\ge 4$) and across temperature rises, $\psi_{tp}<0$, regions are more likely to join after tipping. Thus except for very small coalitions, the occurrence of tipping events incentivizes coalition expansion, as regions gain greater benefits from joining post-tipping.
\begin{figure}[H]
\includegraphics[width=\linewidth, keepaspectratio]{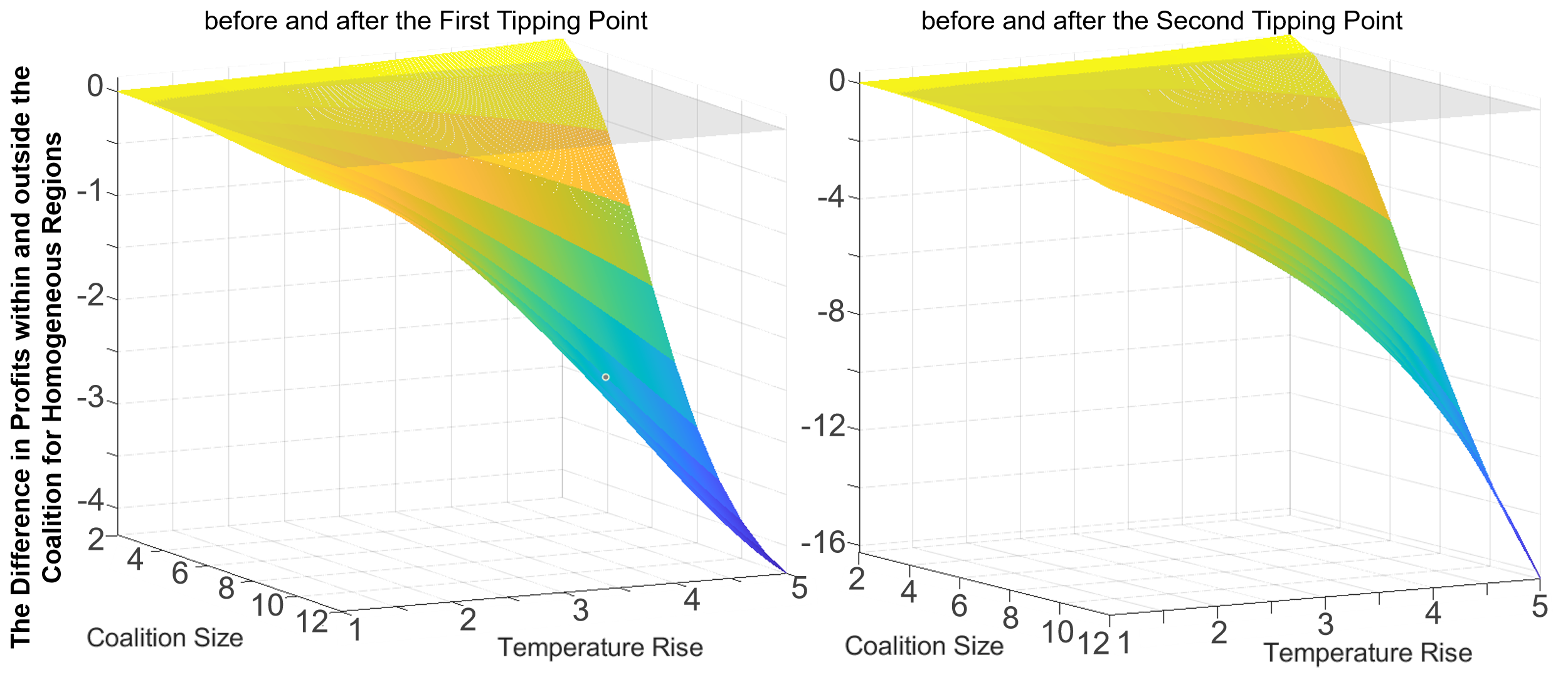}
\caption{\textbf{Values of $\psi_{tp}$ at first tipping occurrence and second tipping occurrence under various coalition sizes and temperature rises} \\Note: Gray plane represents the surface where $\psi_{tp} = 0$.}
\label{Fa6}
\end{figure}

\end{appendix}

\newpage

\begin{appendix}
    
\renewcommand{\thefigure}{S.\arabic{figure}}
\global\long\def\thefigure{S.\arabic{figure}}%
\setcounter{figure}{0} 
\global\long\def\thetable{S.\arabic{table}}%
\setcounter{table}{0} 
\global\long\def\theequation{S.\arabic{equation}}%
\setcounter{equation}{0} 
\global\long\def\thesection{S.\arabic{section}}%
\setcounter{section}{0}

\section*{Supplementary}

\section{Comparison of Chebyshev node selection}\label{appA44}
Our analysis demonstrates excellent accuracy and efficiency with $K=4$, as Figures \ref{Fa3}, \ref{Fa4}, and \ref{Fa5} show negligible differences between solutions with $K=4$ and $K=6$. This computational efficiency holds consistently throughout our parameter space, justifying our choice of $K=4$ for optimal performance.
\begin{figure}[H]
\includegraphics[width=\linewidth, keepaspectratio]{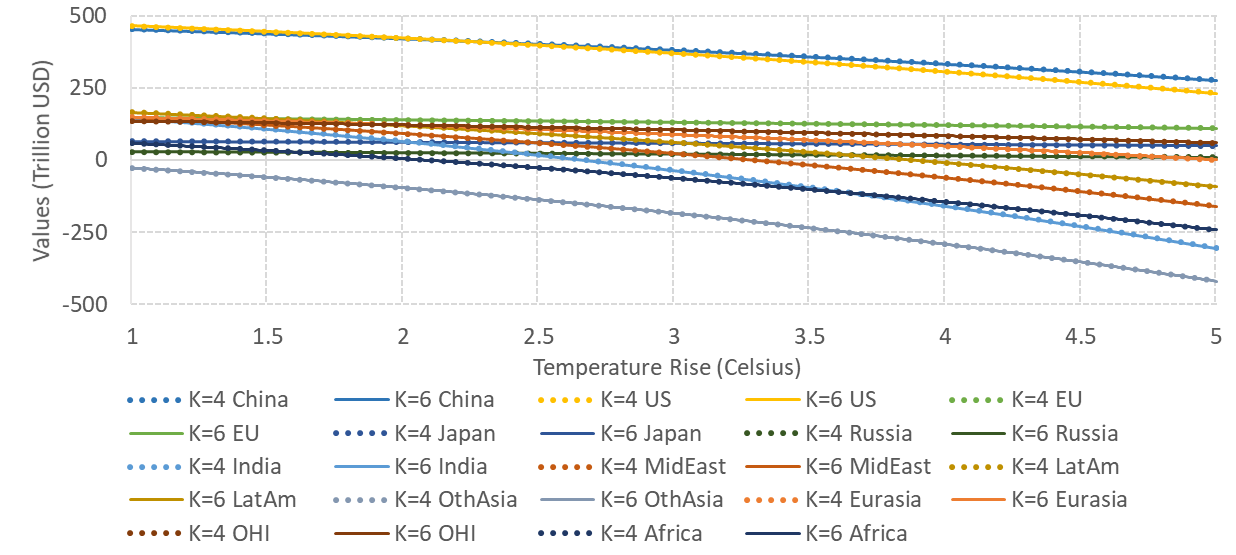}
\caption{\textbf{Solutions of region's value functions with $K=4$ and $K=6$ when there is no coalition} \\Note: Temperature rise from 1.0\textcelsius{} to 5.0\textcelsius, $L_{i}=1\%$, $\tau=5\%$, and no tipping has been reached.}
\label{Fa3}
\end{figure}
\begin{figure}[H]
\includegraphics[width=\linewidth, keepaspectratio]{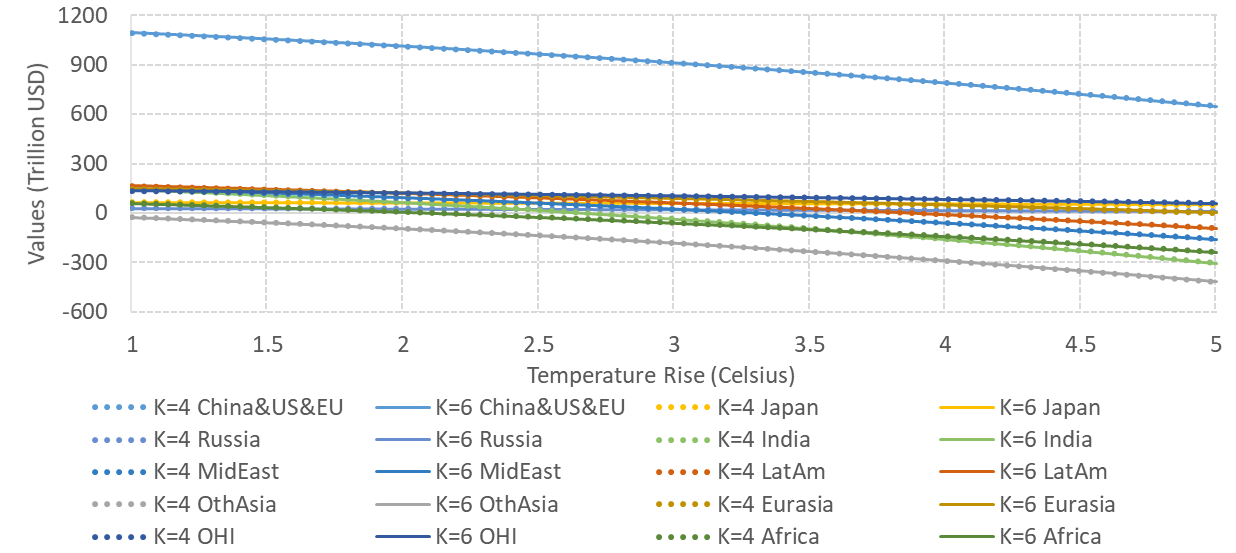}
\caption{\textbf{Solutions of region's value functions with $K=4$ and $K=6$ when China, US, and EU are in coalition} \\Note: Temperature rise from 1.0\textcelsius{} to 5.0\textcelsius, $L_{i}=1\%$, $\tau=5\%$, and no tipping has been reached.}
\label{Fa4}
\end{figure}
\begin{figure}[H]
\includegraphics[width=\linewidth, keepaspectratio]{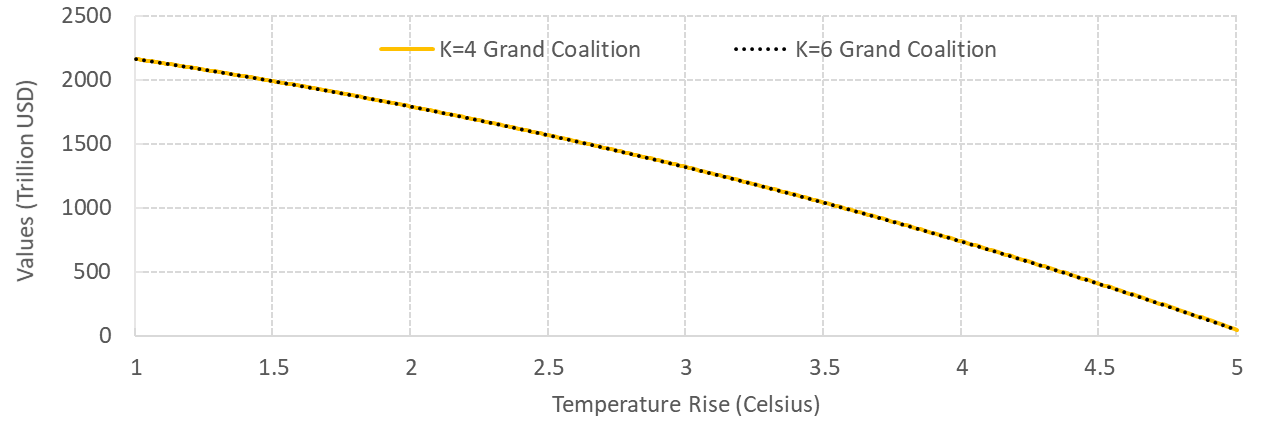}
\caption{\textbf{Solutions of region's value functions with $K=4$ and $K=6$ when all regions are in coalition} \\Note: Temperature rise from 1.0\textcelsius{} to 5.0\textcelsius, $L_{i}=1\%$, $\tau=5\%$, and no tipping has been reached.}
\label{Fa5}
\end{figure}

\section{Solution with two tippings}\label{appB}
\subsection{Model solution after two tippings occurred}\label{appB1}
After both tipping events have occurred (State 1), coalition members and non-members solve their respective optimization problems:
\begin{equation}
\max_{q_{i}}\ \sum_{i\in S}\int_{0}^{\infty}e^{-rt}\left(
\begin{split}
&\left(1+\tau\right)\left(\alpha_{i}q_{i}(t)-\frac{1}{2}\beta_{i}q_{i}(t)^{2}+\epsilon_{i}\right)\\
&-\frac{1}{2}\rho_{i}T(t)^{2}-\eta_{i}T(t)-L^1_{i}-L^2_{i}    
\end{split}
\right)dt
\end{equation}
\begin{equation}
\max_{q_{j}} \int_{0}^{\infty}e^{-rt}\left(
\begin{split}
&\alpha_{j}q_{j}(t)-\frac{1}{2}\beta_{j}q_{j}(t)^{2}+\epsilon_{j}\\
&-\frac{1}{2}\rho_{j}T(t)^{2}-\eta_{j}T(t)-L^1_{j}-L^2_{j}\\
\end{split}
\right)dt
\end{equation}
for each $j\notin S$.

The HJB equations yield optimal strategies:
\begin{equation}
\begin{cases}
q_{i}^{1*}\left(T\right)=\frac{\left(1+\tau\right)\alpha_{i}+\lambda \frac{dV^{1}\left(T\right)}{dT}}{\left(1+\tau\right)\beta_{i}}\\
q_{j}^{1*}\left(T\right)=\frac{\alpha_{j}+\lambda \frac{dW^{1}_{j}\left(T\right)}{dT}}{\beta_{j}}
\end{cases}
\end{equation}

The quadratic value functions take the form which is similar to that of after the climate tipping with only one tipping (Appendix A1), with coefficients determined in Table \ref{A2}.
\begin{table}[H]
\caption{Values of undetermined coefficients}
\small
\begin{tabular}{ll}
\hline
{\small{}Member regions} & {\small{}Non-member regions}\\
\hline 
{\footnotesize{}$v_{2}=\frac{\mathrel{\lambda^{2}u_{2}-\frac{1}{2}r+\sqrt{\left(\lambda^{2}u_{2}-\frac{1}{2}r\right)^{2}-\left(\sum_{i\in S}\frac{\lambda^{2}}{\left(1+\tau\right)\beta_{i}}\right)\left(\sum_{i\in S}\rho_{i}\right)}}}{\sum_{i\in S}\frac{\lambda^{2}}{\left(1+\tau\right)\beta_{i}}}$} & {\footnotesize{}$w_{j,2}=\frac{\mathrel{\lambda^{2}u_{2}-\frac{1}{2}r+\sqrt{\left(\lambda^{2}u_{2}-\frac{1}{2}r\right)^{2}-\frac{\lambda^{2}}{\beta_{j}}\rho_{j}}}}{\frac{\lambda^{2}}{\beta_{j}}}$}\\
\hline 
{\footnotesize{}$v_{1}=\frac{\left(\lambda^{2}u_{1}+\lambda A\right)v_{2}-\sum_{i\in S}\eta_{i}}{\left(\sum_{i\in S}\frac{\lambda^{2}v_{2}}{\left(1+\tau\right)\beta_{i}}\right)-\left(\lambda^{2}u_{2}-r\right)}$} & {\footnotesize{}$w_{j,1}=\frac{\left(\lambda^{2}u_{1}+\lambda A\right)w_{j,2}-\eta_{j}}{\frac{\lambda^{2}w_{j,2}}{\beta_{j}}-\left(\lambda^{2}u_{2}-r\right)}$}\\
\hline 
{\footnotesize{}$v_{0}=\frac{
\begin{aligned}
& \sum_{i\in S}\left(\frac{\left(1+\tau\right)^{2}{\alpha_{i}}^{2}-\lambda^{2}\left(v_{1}\right)^{2}}{2\left(1+\tau\right)\beta_{i}}
+\left(1+\tau\right)\epsilon_{i}-L^1_{i}-L^2_{i}\right)\\
& \quad +\left(\lambda^{2}u_{1}+\lambda A\right)v_{1}\\
\end{aligned}
}{r}$}
& {\footnotesize{}$w_{j,0}=\frac{
\begin{aligned}
& \frac{{\alpha_{j}}^{2}-\lambda^{2}\left(w_{j,1}\right)^{2}}{2\beta_{j}}+\epsilon_{j}-L^1_{j}-L^2_{j} \\
& \quad +\left(\lambda^{2}u_{1}+\lambda A\right)w_{j,1}\\
\end{aligned}
}{r}$}\\
\hline
\end{tabular}
\label{A2}
\end{table}

\noindent where $u_{1}=\frac{\left(\sum_{i\in S}\frac{1}{\left(1+\tau\right)\beta_{i}}\right)\left[\frac{\lambda Av_{2}-\sum_{i\in S}\eta_{i}}{\left(\sum_{i\in S}\frac{\lambda^{2}v_{2}}{\left(1+\tau\right)\beta_{i}}\right)-\left(\lambda^{2}u_{2}-r\right)}\right]+\left\{ \sum_{j\notin S}\frac{\lambda Aw_{j,2}-\eta_{j}}{\beta_{j}\left[\frac{\lambda^{2}w_{j,2}}{\beta_{j}}-\left(\lambda^{2}u_{2}-r\right)\right]}\right\} }{1-\left[\frac{\sum_{i\in S}\frac{\lambda^{2}v_{2}}{\left(1+\tau\right)\beta_{i}}}{\left(\sum_{i\in S}\frac{\lambda^{2}v_{2}}{\left(1+\tau\right)\beta_{i}}\right)-\left(\lambda^{2}u_{2}-r\right)}+\sum_{j\notin S}\frac{\frac{\lambda^{2}w_{j,2}}{\beta_{j}}}{\frac{\lambda^{2}w_{j,2}}{\beta_{j}}-\left(\lambda^{2}u_{2}-r\right)}\right]}$,
and $u_{2}$ is the solution to equation(for each region $j\notin S$, which is not in the coalition):
\begin{equation*}
\footnotesize{}
0=(n - m)u_2\lambda^2 \left[
\begin{split}
&\frac{1}{2}r(n - m + 1)-\sqrt{\left( \lambda^2 u_2 - \frac{1}{2}r \right)^2 - \left( \sum_{i \in S} \frac{\lambda^2}{(1+\tau)\beta_i} \right) \left( \sum_{i \in S} \rho_i \right)}\\
&-\sum_{j\notin S} \sqrt{\left( \lambda^2 u_2 - \frac{1}{2}r \right)^2 - \frac{\lambda^2}{\beta_j} \rho_j}\\
\end{split}
\right]
\end{equation*}

This solution extends the single-tipping case while incorporating tipping damages $L^1$ and $L^2$. 

\subsection{Model solution after first tipping occurred but second is not}\label{appB2}
For State 2 (first tipping occurred, second pending), regions solve: 
\begin{equation}
\max_{q_{i}} E\sum_{i\in S}\left[
\begin{split}
&\int_{0}^{t_{2}}e^{-rt}\left(
\begin{split}
&\left(1+\tau\right)\left(\alpha_{i}q_{i}(t)-\frac{1}{2}\beta_{i}q_{i}(t)^{2}+\epsilon_{i}\right)\\
&-\frac{1}{2}\rho_{i}T(t)^{2}-\eta_{i}T(t)-L^1_{i}\\
\end{split}
\right)dt\\
& +\int_{t_{2}}^{\infty}e^{-rt}\left(
\begin{split}
&\left(1+\tau\right)\left(\alpha_{i}q_{i}(t)-\frac{1}{2}\beta_{i}q_{i}(t)^{2}+\epsilon_{i}\right)\\
&-\frac{1}{2}\rho_{i}T(t)^{2}-\eta_{i}T(t)-L^1_{i}-L^2_{i}
\end{split}
\right)dt\\
\end{split}
\right]
\end{equation}
\begin{equation}
\max_{q_{j}} E\left[
\begin{split}
&\int_{0}^{t_{2}}e^{-rt}\left(
\begin{split}
&\alpha_{j}q_{j}(t)-\frac{1}{2}\beta_{j}q_{j}(t)^{2}+\epsilon_{j}\\
&-\frac{1}{2}\rho_{j}T(t)^{2}-\eta_{j}T(t)-L^1_{j}\\
 \end{split}
\right)dt \\
&+\int_{t_{2}}^{\infty}e^{-rt}\left(
\begin{split}
&\alpha_{j}q_{j}(t)-\frac{1}{2}\beta_{j}q_{j}(t)^{2}+\epsilon_{j}\\
&-\frac{1}{2}\rho_{j}T(t)^{2}-\eta_{j}T(t)-L^1_{j}-L^2_{j}\\
 \end{split}
\right)dt
 \end{split}
 \right]
\end{equation}
for each $j\not\in S$.

The value functions satisfy:
\begin{equation}
rV^{2}\left(T\right)= \max_{q_i}\left[
\begin{split}
&\sum_{i\in S}\left(
\begin{split}
&\left(1+\tau\right)\left(\alpha_{i}q_{i}-\frac{1}{2}\beta_{i}{q_{i}}^{2}+\epsilon_{i}\right)\\
&-\frac{1}{2}\rho_{i}T^{2}-\eta_{i}T-L^1_{i}\\
\end{split}
\right)\\
&+\frac{dV^{2}\left(T\right)}{dT}\lambda\left(\sum_{i\in S}q_{i}+\sum_{j\notin S}q_{j}\right) \\
& +H_{2}\left(T\right)\left(V^{1}\left(T\right)-V^{2}\left(T\right)\right)\\
\end{split}
\right]
\end{equation}
\begin{equation}
rW^{2}_{j}\left(T\right)=\max_{q_j}\left[
\begin{split}
&\left(\alpha_{j}q_{j}-\frac{1}{2}\beta_{j}{q_{j}}^{2}+\epsilon_{j}-\frac{1}{2}\rho_{j}T^{2}-\eta_{j}T-L^1_{i}\right)\\
 &+\frac{dW^{2}_{j}\left(T\right)}{dT} \lambda\left(\sum_{i\in S}q_{i}+\sum_{j\notin S}q_{j}\right)\\
 &+H_{2}\left(T\right)\left(W^{1}_{j}\left(T\right)-W^{2}_{j}\left(T\right)\right)\\
 \end{split}
 \right]
\end{equation}

\noindent where $\frac{dV^2\left(T\right)}{dT}$ and $\frac{dW^{2}_{j}\left(T\right)}{dT}$ are the first-order derivatives of value functions $V^{2}\left(T\right)$ and $W^{2}_{j}\left(T\right)$ on $T$, respectively. 

Yielding optimal strategies:
\begin{equation}
\begin{cases}
q_{i}^{2*}\left(T\right)=\frac{\left(1+\tau\right)\alpha_{i}+\lambda \frac{dV^{2}\left(T\right)}{dT}}{\left(1+\tau\right)\beta_{i}}\\
q_{j}^{2*}\left(T\right)=\frac{\alpha_{j}+\lambda \frac{dW^{2}_{j}\left(T\right)}{dT}}{\beta_{j}}
\end{cases}
\end{equation}

Using Chebyshev approximation on $T\in[\underline{T},\overline{T}]$: 
\begin{equation}
\begin{cases}
V^{2}\left(T\right)=\sum_{k=0}^{K}a_{k}\phi_{k}\left(T \right)\\
W^{2}_{j}\left(T\right)=\sum_{k=0}^{K}b_{k,j}\phi_{k}\left(T \right)
\end{cases}
\end{equation}

We solve at Chebyshev nodes  $T_{d}$: 
\begin{equation}
\begin{split}
0=&-\left[r+H_{2}\left(T_{d}\right)\right]\sum_{k=0}^{K}a_{k}\phi_{k}(T_{d})\\
&+\sum_{i\in S}\left[
\begin{split}
&\frac{\left(1+\tau\right)^{2}{\alpha_{i}}^{2}-\lambda^{2}\left(\sum_{k=0}^{K}a_{k}{\phi_{k}}_{T_{d}}(T_{d})\right)^{2}}{2\left(1+\tau\right)\beta_{i}}\\
&+\left(1+\tau\right)\epsilon_{i}-\frac{1}{2}\rho_{i}{T_{d}}^{2}-\eta_{i}T_{d}-L^1_{i}
\end{split}
\right]\\
&+\left(\lambda A+\lambda^{2}\frac{d{U^2}\left(T_{d}\right)}{dT_{d}}\right)\sum_{k=0}^{K}a_{k}{\phi_{k}}_{T_{d}}(T_{d})+H_{2}\left(T_{d}\right)\left(v_{0}+v_{1}T_{d}+\frac{1}{2}v_{2}{T_{d}}^{2}\right)\\
\end{split}
\end{equation}
\begin{equation}
\begin{split}
0=&-\left[r+H_{2}\left(T_{d}\right)\right]\sum_{k=0}^{K}b_{k,j}\phi_{k}(T_{d}) \\
&+\frac{{\alpha_{j}}^{2}-\lambda^{2}\left(\sum_{k=0}^{K}b_{k,j}{\phi_{k}}_{T_{d}}(T_{d})\right)^{2}}{2\beta_{j}}+\epsilon_{j}-\frac{1}{2}\rho_{j}{T_{d}}^{2}-\eta_{j}T_{d}-L^1_{j}\\
&+\left(\lambda A+\lambda^{2}\frac{d{U^2}\left(T_{d}\right)}{dT_{d}}\right)\sum_{k=0}^{K}b_{k,j}{\phi_{k}}_{T_{d}}(T_{d})\\
&+H_{2}\left(T_{d}\right)\left(w_{j,0}+w_{j,1}T_{d}+\frac{1}{2}w_{j,2}{T_{d}}^{2}\right)\\
\end{split}
\end{equation}
\noindent where $d=1,...,K+1$. Boundary conditions use analytical solutions from $T=\underline{T}$. 

\subsection{Model solution before first tipping occurred}\label{appB3}
In State 3 (no tipping events occurred), regions solve sequential optimization problems accounting for potential future tipping events:
\begin{equation}
\max_{q_i} E\sum_{i\in S}\left[
\begin{split}
&\int_{0}^{t_{1}}e^{-rt}\left(
\begin{split}
&\left(1+\tau\right)\left(\alpha_{i}q_{i}(t)-\frac{1}{2}\beta_{i}q_{i}(t)^{2}+\epsilon_{i}\right)\\
&-\frac{1}{2}\rho_{i}T(t)^{2}-\eta_{i}T(t)\\
\end{split}
\right)dt\\
 &+\int_{t_{1}}^{t_{2}}e^{-rt}\left(
 \begin{split}
&\left(1+\tau\right)\left(\alpha_{i}q_{i}(t)-\frac{1}{2}\beta_{i}q_{i}(t)^{2}+\epsilon_{i}\right)\\
&-\frac{1}{2}\rho_{i}T(t)^{2}-\eta_{i}T(t)-L^1_{i}\\
\end{split}
\right)dt\\
 &+ \int_{t_{2}}^{\infty}e^{-rt}\left(
 \begin{split}
 &\left(1+\tau\right)\left(\alpha_{i}q_{i}(t)-\frac{1}{2}\beta_{i}q_{i}(t)^{2}+\epsilon_{i}\right)\\
 &-\frac{1}{2}\rho_{i}T(t)^{2}-\eta_{i}T(t)-L^1_{i}-L^2_{i}\\
 \end{split}
 \right)dt\\
 \end{split}
 \right]
\end{equation}
\begin{equation}
\max_{q_j} E\left[
\begin{split}
&\int_{0}^{t_{1}}e^{-rt}\left(
\begin{split}
&\alpha_{j}q_{j}(t)-\frac{1}{2}\beta_{j}q_{j}(t)^{2}+\epsilon_{j}\\
&-\frac{1}{2}\rho_{j}T(t)^{2}-\eta_{j}T(t)\\
\end{split}
\right)dt\\
 &+ \int_{t_{1}}^{t_{2}}e^{-rt}\left(
 \begin{split}
 &\alpha_{j}q_{j}(t)-\frac{1}{2}\beta_{j}q_{j}(t)^{2}+\epsilon_{j}\\
 &-\frac{1}{2}\rho_{j}T(t)^{2}-\eta_{j}T(t)-L^1_{j}\\
 \end{split}
 \right)dt \\
 &+\int_{t_{2}}^{\infty}e^{-rt}\left(
 \begin{split}
 &\alpha_{j}q_{j}(t)-\frac{1}{2}\beta_{j}q_{j}(t)^{2}+\epsilon_{j}\\
 &-\frac{1}{2}\rho_{j}T(t)^{2}-\eta_{j}T(t)-L^1_{j}-L^2_{j}\\
 \end{split}
 \right)dt\\
 \end{split}
 \right]
\end{equation}
for each $j\not\in S$.

The value functions satisfy: 
\begin{equation}
rV^{3}\left(T\right)= \max_{q_i}\left[
\begin{split}
& \sum_{i\in S}\left(\left(1+\tau\right)\left(\alpha_{i}q_{i}-\frac{1}{2}\beta_{i}{q_{i}}^{2}+\epsilon_{i}\right)-\frac{1}{2}\rho_{i}T^{2}-\eta_{i}T\right)\\
 &+\frac{d{V^3}\left(T\right)}{dT} \lambda\left(\sum_{i\in S}q_{i}+\sum_{j\notin S}q_{j}\right)\\
 &+H_{1}\left(T\right)\left(V^{2}\left(T\right)-V^{3}(T)\right)\\
 \end{split}
 \right]
\end{equation}
\begin{equation}
rW^3_{j}\left(T\right)= \max_{q_j}\left[
\begin{split}
&\left(\alpha_{j}q_{j}-\frac{1}{2}\beta_{j}{q_{j}}^{2}+\epsilon_{j}-\frac{1}{2}\rho_{j}T^{2}-\eta_{j}T\right)\\
 &+\frac{dW^3_{j}\left(T\right)}{dT} \lambda\left(\sum_{i\in S}q_{i}+\sum_{j\notin S}q_{j}\right)\\
 &+ H_{1}\left(T\right)\left(W^{2}_{j}\left(T\right)-W^3_{j}\left(T\right)\right)\\
 \end{split}
 \right]
\end{equation}

Taking the optimal emission strategies of member regions and non-member regions into HJB functions, and using Chebyshev approximation, We solve:

\begin{equation}
\begin{split}
0=&-\left[r+H_{1}\left(T_{d}\right)\right]\sum_{k=0}^{K}a_{k}\phi_{k}(T_{d})\\
&+\sum_{i\in S}\left[\frac{\left(1+\tau\right)^{2}{\alpha_{i}}^{2}-\lambda^{2}\left(\sum_{k=0}^{K}a_{k}{\phi_{k}}_{T_{d}}(T_{d})\right)^{2}}{2\left(1+\tau\right)\beta_{i}}+\left(1+\tau\right)\epsilon_{i}-\frac{1}{2}\rho_{i}{T_{d}}^{2}-\eta_{i}T_{d}\right]\\
&+\left(\lambda A+\lambda^{2}\frac{d{U^3}\left(T_{d}\right)}{dT_{d}}\right)\sum_{k=0}^{K}a_{k}{\phi_{k}}_{T_{d}}(T_{d})+H_{1}\left(T_{d}\right)V^{2}\left(T\right) 
\end{split}
\end{equation}
\begin{equation}
\begin{split}
0=&-\left[r+H_{1}\left(T_{d}\right)\right]\sum_{k=0}^{K}b_{k,j}\phi_{k}(T_{d}) \\
&+\frac{{\alpha_{j}}^{2}-\lambda^{2}\left(\sum_{k=0}^{K}b_{k,j}{\phi_{k}}_{T_{d}}(T_{d})\right)^{2}}{2\beta_{j}}+\epsilon_{j}-\frac{1}{2}\rho_{j}{T_{d}}^{2}-\eta_{j}T_{d}\\
&+\left(\lambda A+\lambda^{2}\frac{d{U^3}\left(T_{d}\right)}{dT_{d}}\right)\sum_{k=0}^{K}b_{k,j}{\phi_{k}}_{T_{d}}(T_{d})+H_{1}\left(T_{d}\right)W^{2}_{j}\left(T\right) 
\end{split}
\end{equation}
\noindent where $\frac{dU^3\left(T\right)}{dT}$ is the derivative of $U^3\left(T\right)$ with respect to $T$, and $U^3\left(T\right)=\sum_{i\in S}\frac{V^{3}\left(T\right)}{\left(1+\tau\right)\beta_{i}}+\sum_{j\notin S}\frac{W^{3}_{j}\left(T\right)}{\beta_{j}}$. $d=1,...,K+1$, $V^{3}(\underline{T})$ and $W^3_{j}(\underline{T})$ are given by the analytical solution for $T=\underline{T}$. $V^{2}\left(T\right)$ and $W^{2}_{j}\left(T\right)$ can be obtained from the numerical solutions in Supplement \ref{appB2}. 

\section{Comparison of allocation mechanisms}\label{appE}
We evaluate two payment transfer mechanisms to sustain cooperation: \textit{Shapley Allocation}, benefits are distributed based on each member’s marginal contribution to the coalition; \textit{$\gamma$-core allocation}, benefits are allocated according to the proportion of benefits each member would retain if they leave the coalition. While both mechanisms are logically justified, our numerical simulations demonstrate that the $\gamma$-core allocation is more effective in fostering cooperation and maximizing collective benefits. As a result, the main text focuses on the $\gamma$-core allocation for comparative analysis.

Under single tipping event case, both Shapley and $\gamma$-core allocations sustain a grand coalition (full cooperation) over time and across rising temperatures (Figure \ref{Fa7}).And the coalition maintains sufficiently large benefits for all members, incentivizing continued participation under both mechanisms (Figure \ref{Fa8}). Under two tipping events case, the $\gamma$-core allocation supports a larger stable coalition compared to the Shapley allocation (Figure \ref{Fa9}). And the $\gamma$-core allocation consistently yields greater collective benefits than the Shapley allocation, reinforcing its advantage in complex scenarios (Figure \ref{Fa10}). 
\begin{figure}[H]
\includegraphics[width=\linewidth, keepaspectratio]{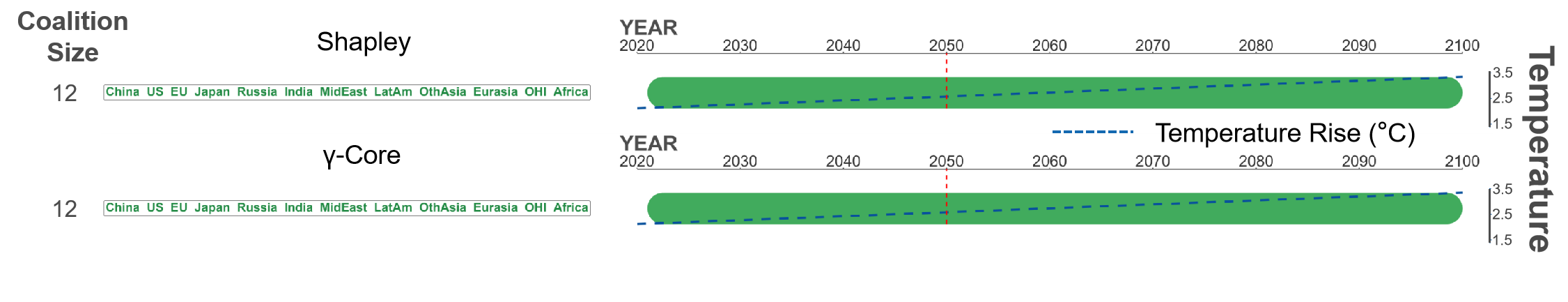}
\caption{\textbf{Stable coalitions and temperature rise from 2020 to 2100} \\Note: $L=4\%$, $\bar\tau=5\%$, temperature trajectory (blue dashed line), coalition membership (green labels: members; gray: non-members), red line marking the 2050 tipping event.}
\label{Fa7}
\end{figure}
\begin{figure}[H]
\includegraphics[width=\linewidth, keepaspectratio]{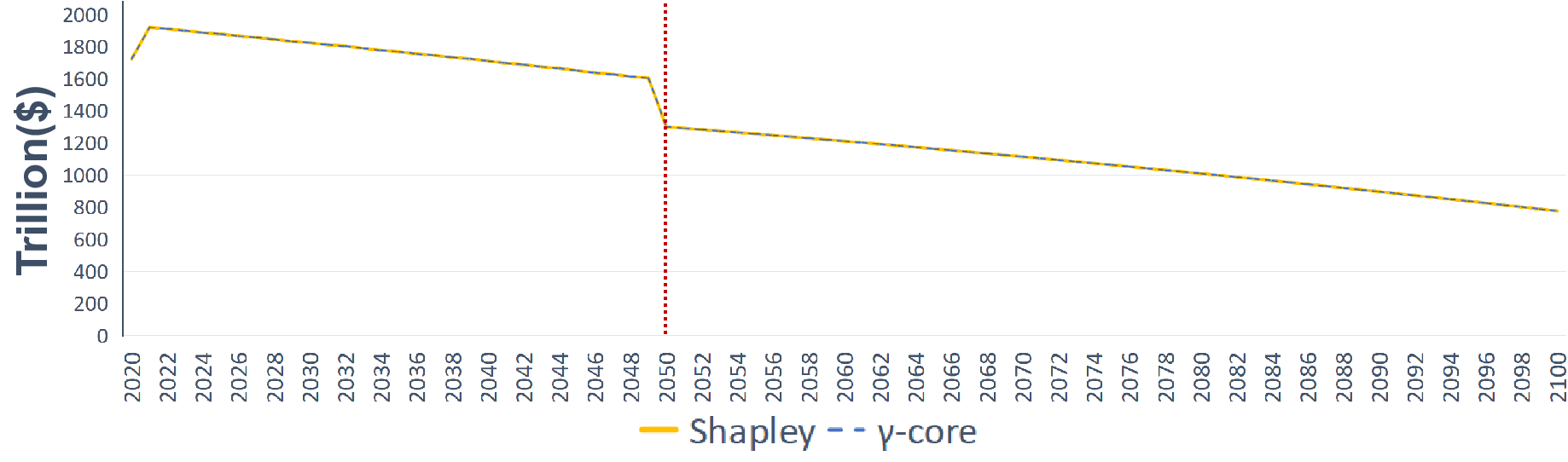}
\caption{\textbf{Collective benefits from 2020 to 2100 (Trillion)} \\Note: $L=4\%$, $\bar\tau=5\%$, red line marking the 2050 tipping event.}
\label{Fa8}
\end{figure}
\begin{figure}[H]
\includegraphics[width=\linewidth, keepaspectratio]{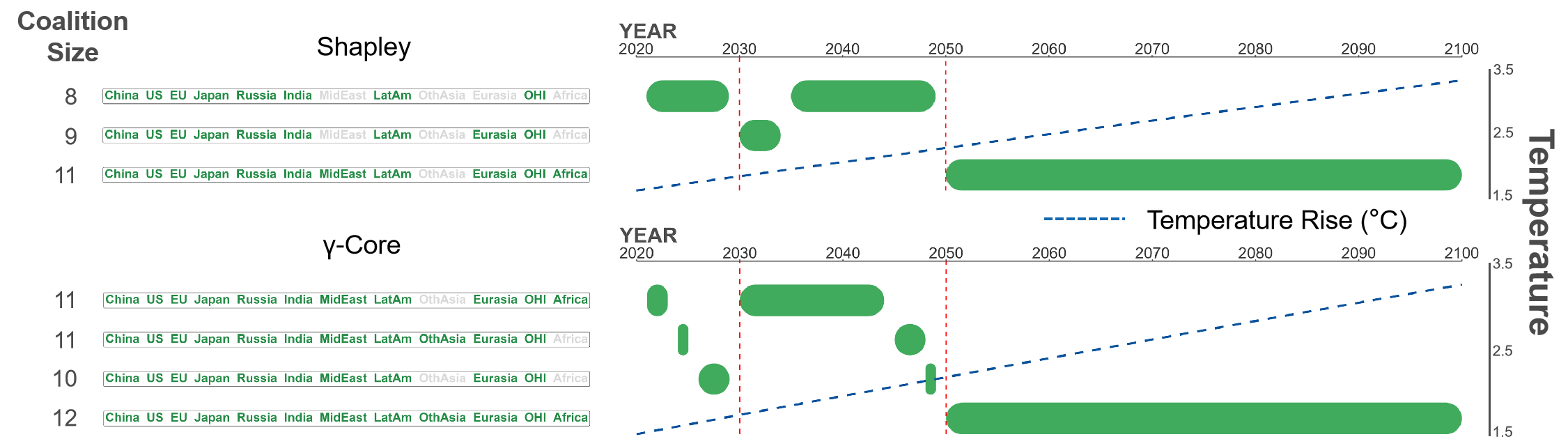}
\caption{\textbf{Stable coalitions and temperature rise from 2020 to 2100} \\Note: $L^1=2\%$, $L^2=4\%$, $\bar\tau=5\%$, temperature trajectory (blue dashed line), coalition membership (green labels: members; gray: non-members), red lines marking the 2030 and 2050 tipping events.}
\label{Fa9}
\end{figure}
\begin{figure}[H]
\includegraphics[width=\linewidth, keepaspectratio]{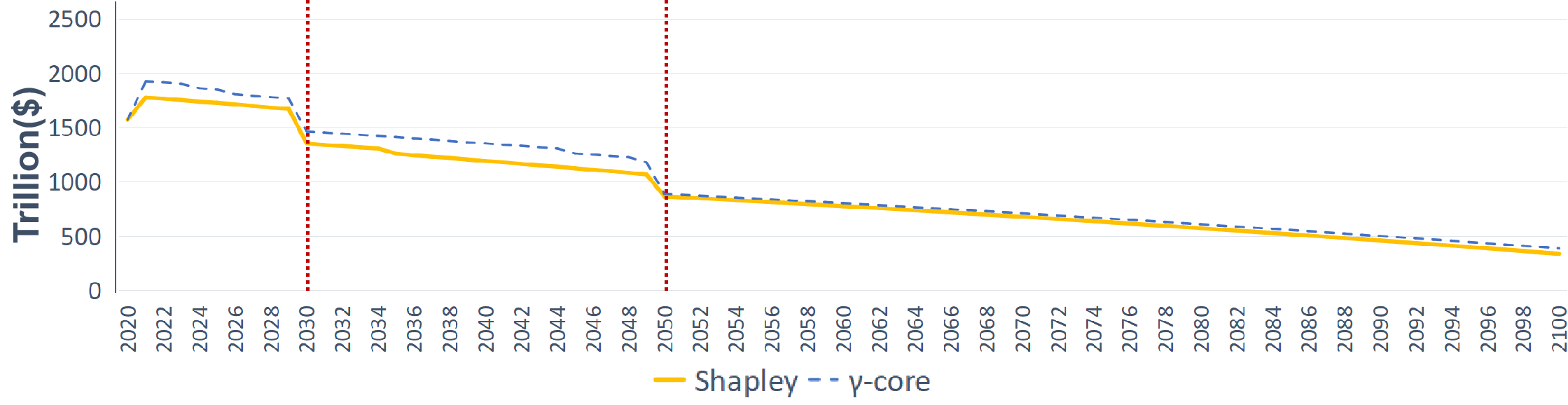}
\caption{\textbf{Collective benefits from 2020 to 2100 (Trillion)} \\Note: $L_{i}=4\%$, $\bar\tau=5\%$, red lines marking the 2030 and 2050 tipping events.}
\label{Fa10}
\end{figure}

\section{Parameters and regions division}\label{appF}
\subsection{Parameters Estimation Results}\label{appF1}
\begin{table}[H]
\caption{Estimated values of parameters in numerical simulations}
\begin{tabular}{cccccc}
\hline
 & $\alpha$ & $\beta$ & $\epsilon$ & $\rho$ & $\eta$\\
\hline 
China & 14.13 & 8.02 & 0 & 0.254 & -0.152\\
US & 23.52 & 21.21 & 0 & 0.331 & -0.178\\
EU & 24.18 & 74.19 & 0 & 0.043 & -0.020\\
Japan & 8.87 & 22.90 & 0 & 0.021 & -0.011\\
Russia & 6.23 & 24.05 & 0 & 0.026 & -0.017\\
India & 4.90 & 1.96 & 0 & 0.662 & -0.412\\
MidEast & 6.92 & 4.41 & 0 & 0.452 & -0.299\\
LatAm & 7.29 & 4.71 & 0 & 0.380 & -0.245\\
OthAsia & 4.16 & 6.17 & 0 & 0.593 & -0.454\\
Eurasia & 9.09 & 9.06 & 0 & 0.211 & -0.137\\
OHI & 14.17 & 26.12 & 0 & 0.107 & -0.054\\
Africa & 3.53 & 2.02 & 0 & 0.446 & -0.332\\
\hline
\end{tabular}
\end{table}

\subsection{Geographic Division of Regions}\label{appF2}

\textbf{China}: China. 

\textbf{US}: American Samoa, Guam, United States. 

\textbf{EU}: Austria, Belgium, Czech Republic, Denmark, Faeroe, Islands, Finland, France, Germany, Greece, Greenland, Hungary, Iceland, Ireland, Italy, Luxembourg, Malta, Netherlands, Norway, Poland, Portugal, Slovak Republic, Spain, Sweden, Switzerland, Turkey, United Kingdom. 

\textbf{Japan}: Japan. 

\textbf{Russia}: Russian Federation. 

\textbf{India}: India. \textbf{MidEast}: Bahrain, Cyprus, Iran, Islamic Rep., Iraq, Israel, Jordan, Kuwait, Lebanon, Oman, Qatar, Saudi Arabia, Syrian Arab Republic, United Arab Emirates, West Bank and Gaza, Yemen. 

\textbf{LatAm}: Antigua and Barbuda, Argentina, Aruba, Bahamas, Barbados, Belize, Bermuda, Bolivia, Brazil, Cayman Islands, Chile, Colombia, Costa Rica, Cuba, Dominica, Dominican Republic, Ecuador, El Salvador, Grenada, Guatemala, Guyana, Haiti, Honduras, Jamaica, Mexico, Caribbean Netherlands, Nicaragua, Panama, Paraguay, Peru, Puerto Rico, St. Kitts and Nevis, St. Lucia, St. Vincent and the Grenadines, Suriname, Trinidad and Tobago, Uruguay, Venezuela, U.S. Virgin Islands. 

\textbf{OthAsia}: Afghanistan, Bangladesh, Bhutan, Brunei Darussalam, Cambodia, Fiji, French Polynesia, Indonesia, Kiribati, North Korea, Lao PDR, Malaysia, Maldives, Mongolia, Myanmar, Nepal, New Caledonia, Pakistan, Papua New Guinea, Philippines, Samoa, Solomon Islands, Sri Lanka, Thailand, Timor-Leste, Tonga, Vanuatu, Vietnam. 

\textbf{Eurasia}: Albania, Armenia, Azerbaijan, Belarus, Bosnia and Herzegovina, Bulgaria, Croatia, Estonia, Georgia, Kazakhstan, Kyrgyz, Republic, Latvia, Lithuania, North Macedonia, Moldova, Montenegro, Romania, Serbia, Slovenia, Tajikistan, Turkmenistan, Ukraine, Uzbekistan. 

\textbf{OHI}: Australia, Canada, Hong Kong (China), South Korea, Macao (China), New Zealand, Singapore. 

\textbf{Africa}: Algeria, Angola, Benin, Botswana, Burkina Faso, Burundi, Cameroon, Cape Verde, Central African Republic, Chad, Comoros, Democratic Republic of the Congo, Republic of the Congo, Ivory Coast, Djibouti, Egypt, Equatorial Guinea, Eritrea, Ethiopia, Gabon, Gambia, Ghana, Guinea, Guinea-Bissau, Kenya, Lesotho, Liberia, Libya, Madagascar, Malawi, Mali, Mauritania, Mauritius, Morocco, Mozambique, Namibia, Niger, Nigeria, Rwanda, Sao Tome and Principe, Senegal, Seychelles, Sierra Leone, Somalia, South Africa, Sudan, Eswatini, Tanzania, Togo, Tunisia, Uganda, Zambia, Zimbabwe.

\end{appendix}

\end{document}